\newtheorem{lemma}{Lemma}
\newtheorem{theorem}{Theorem}
\newtheorem{definition}{Definition}
\newtheorem{proof}{Proof}
\newcommand\Item[1][]{%
  \ifx\relax#1\relax  \item \else \item[#1] \fi
  \abovedisplayskip=0pt\abovedisplayshortskip=0pt~\vspace*{-\baselineskip}}
\begin{document}

\title{SMCL - Stochastic Model Checker for Learning in Games}

\author{Hongyang Qu, Michalis Smyrnakis and Sandor M. Veres}
\affil{Department of Automatic Control and Systems Engineering\\ University of Sheffield,
Sheffield 
S1 3JD, UK \\\{h.qu, m.smyrnakis, s.veres\}@sheffield.ac.uk}

\maketitle

\begin{abstract}
A stochastic model checker is presented for analysing the performance
of game-theoretic learning algorithms. The method enables the
comparison of short-term behaviour of learning algorithms intended for
practical use. The procedure of comparison is automated and it can be tuned for
accuracy and speed. Users can choose from among various learning
algorithms to select a suitable one for a given practical problem. The
powerful performance of the method is enabled by a novel
behaviour-similarity-relation, which compacts large state spaces into
small ones. The stochastic model checking tool is tested on a set of
examples classified into four categories to demonstrate the
effectiveness of selecting suitable algorithms for distributed
decision making.
\end{abstract}

\section{Introduction} \label{sec:intro}

Distributed decision making has recently received considerable
attention for the development of autonomous agent teams, and in
general, where agents need to make their decision in interaction with
their environment, including other agents. Where autonomy is a desired
property for a robotic agent application, for instance in remote or
dangerous fields, battle fields, disaster sites, mining and nuclear
waste processing applications, in such scenarios game-theoretic
learning algorithms can be used as coordination mechanism between
agents. Other applications are field robotics \cite{ifac}, wireless
sensor networks \cite{monitoring,sn}, smart grids
\cite{smart_grid1,smart_grid2}, disaster management \cite{cj,stelios}
and scheduling tasks \cite{sc}. Game theoretic learning can also be
useful in complex environment such as autonomous urban transport,
warehouses, airports and construction sites, as direct communication
between agents could be blocked or broken and consequently the system
has some random, stochastic behaviour.

When autonomy is a desirable property of an agent team then
distributed decision making becomes an important
attribute. Coordination mechanisms are needed to enable the agents to
accomplish a task or achieve a goal. Many learning algorithms have
been proposed in the literature for coordination mechanisms, which
differ in communication requirements, computational cost and presence
or absence of convergence proofs. Overall, there has however been no
technique available to study collective behaviour when only a small
number of iterations are allowed to achieve results.  From an
application point of view it is important to learn and accomplish some
desired task quickly due to various constraints.  This paper provides
methods to study learning performance during short periods.

Apart from short time periods for learning, a second hindrance of
effective cooperation can be the communication bandwidth and the
amount of communication needed for collective achievement by an agent
team.  A game-theoretic formulation, where agents mostly observe each
other, and communicate little, is favourable for many applications.
The game theoretic cooperation we consider in this paper needs very
little or no communication at all between agents. Lack of
communication is compensated by the agent's observing each other and
learning each others strategies.
 
A cooperative game-theoretic learning algorithm can be used to solve a
single decision problem. In a complex task, where a sequence of
decisions are needed, a sequence of algorithms can  be applied. 
If this approach is applicable, then games in strategic form are
generated for each type of decision making problem in a scenario of players, physical environment and rewards. 
This strategic form game can be seen as a
snapshot of a general learning framework, which models the whole
process of accomplishing the task. For example, partially observable
games (POGs) are the game theoretic equivalent of decentralised
Partially Observable Markov Decision Processes
(dec-POMDPs)~\cite{nair,emery2004,oliehoek}. A technique for solving a POG
is to generate strategic form games each time there is a change in the
scenario of the word., i.e. a strategic form game is based on 
the scenario of the world, meaning the players, the physical environment and the shared goal and reward functions.
In the snapshot, the scenario does not change any more, and players then choose actions to maximise
their rewards ~\cite{emery2004,emery2005} using game-theoretic learning algorithms. 


Traditional approaches to analysing the performance of a
game-theoretic learning algorithm are usually based on convergence
proofs (Nash and Pareto Nash optima) or based on extensive simulation
studies. In simulation it is not always feasible or easy to understand
the agents' behaviour. The amount of simulations needed can be huge,
which can leave the possibilities of bad behaviour unexplored. This
paper introduces and analyses a stochastic model checking method which
bundles sets of state evolutions together to reduce complexity and
hence makes probabilistic verification feasible and practicable to
replace simulation studies.

%
%
The techniques presented explore the whole state space of a given learning problem. A state implies  a probability distribution of joint actions by game players in a Discrete-Time
Markov Chain (DTMC); a transition between two states represents a joint action associated with a firing probability. 
Such a DTMC-based state-space model can contain a large number
of states due to the parameters used in some algorithms. We introduce a
{\em behaviour similarity relation} ``$\preceq$'' among states,
which enables  efficient computation of steady state
probabilities using probabilistic model checking~\cite{KNP04a} for {\em
  Nash Equilibria}. The performance of a learning algorithm can be
measured by the probability at which a Nash
equilibrium will be observed after a finite number of iterations. Nash equilibria are good measures of performance as they can
be seen as  optimal solutions to distributed decision making
problems.
However, our verification tool is not restricted to the cases
where steady states exist. It is possible that a learning algorithm
will not reach a steady state after a small number of iterations, or
that a specific sequence of joint actions will lead to a steady state
after a few iterations, while other sequences will not. The 
tool can be used to compute all these probabilities and analyse all
the possible short term outcomes of the learning-algorithms within a
given time period.


The structure of the paper is as follows.  First our game theoretic framework is defined with learning, which is followed in
 Section \ref{sec:algo}  by simulation examples using Fictitious Play~\cite{brown_fict} on the coordination
game of Equation (\ref{toy_reward}) and the Shapley's game
\cite{shapley}.   
The approach for computing finite probabilities for a game with initial conditions
is presented in Section~\ref{sec:framework}. We show comparison among
all supported algorithms on several examples in Section~\ref{sec:exp},
and conclude the paper in Section~\ref{sec:concl}.

\section{Preliminaries} \label{sec:prelim}
\subsection{Game-theoretic definitions} \label{sec:games}
%
\begin{definition}
A game in strategic form has the following elements \cite{games1}:
\begin{itemize}
\item A set of players, $i={1,2,\ldots,\mathcal{I}}$;
\item A set $A^{i}$ of actions for each player $i$, from which it selects its action $a^{i}$;
\item A set $A=\times_{i=1}^{i=\mathcal{I}}A^{i}$ of joint actions,
  where a joint action $a\in A$ is of the form $a=(a^{1},a^{2},\ldots,a^{\mathcal{I}})$; 
\item A reward function $r^{i}:A \rightarrow\mathbb{R}$, where
  $r^{i}(a)$ is the reward that player $i$ will gain if the joint action $a$ is played.
\end{itemize} 
\end{definition}
Given a joint action $a=(a^{1},a^{2},\ldots,a^{\mathcal{I}})$, we
often write $a^i\in a$ to indicate that $a^i$ ($1\le i\le
\mathcal{I}$) is a component of $a$.
We also refer to the set of joint actions of all players but $i$
as $A^{-i}$, and $r^{i}(a^{i},a^{-i})$ the reward that player $i$ will
gain if he plays action $a^i$ and its opponents play
$a^{-i}\in A^{-i}$. Often we write $r^{i}(a^{i})$, instead of
$r^{i}(a^{i},a^{-i})$, when $a^{-i}$ is of no interest. Let $\Delta^{i}$ be the set of all the probability distributions over
player $i$'s action space. A strategy $\sigma^{i}\in \Delta^{i}$ denotes the probability
distribution player $i$ uses to choose from its available
actions, and $\sigma^{i}(a^i)$ the probability of choosing action $a^i\in A^i$. A {\em pure} strategy is the case where
player $i$ choose a single action with probability 1. Other cases are {\em mixed} strategies. 
A joint strategy is defined as
$\sigma=\prod_{i=1}\limits^{\mathcal{I}}\sigma^i 
\in \times_{i=1}^{i=\mathcal{I}}\Delta^{i}$.
We often write $\sigma^{ -i}$ for a joint strategy of player $i$'s
opponents. 
We write $\sigma(a^{i},\sigma^{-i})$ for the probability of player $i$
playing action $a^{i}$ when its opponents use strategies $\sigma^{ -i}$.
 By some abuse of notation, we use
$r^{i}(a^i,\sigma^{-i})$ to denote the expected reward of player $i$ when it
chooses action $a^i$ and its opponents use the joint strategy
$\sigma^{-i}$, i.e., 
$$r^i(a^i,\sigma^{-i})=\sum_{a^{-i} \in A^{-i}} r^i(a^i,a^{-i})\sigma^{-i}(a^{-i}),$$ 
where $\sigma^{-i}(a^{-i})$ is the
probability of playing the joint action $a^{-i}$ by player $i$'s
opponents. Similarly, for a strategy $\sigma^{i}$, we can write 
$$r^i(\sigma^i,\sigma^{-i})=\sum_{a^{i} \in A^{i}}\sum_{a^{-i} \in A^{-i}}
\sigma^{i}(a^{i})r^i(a^i,a^{-i})\sigma^{-i}(a^{-i}).$$

To choose actions, players can use either deterministic or stochastic
rules based on joint strategies. A common deterministic rule is {\em
  best response} (BR) $\hat{\sigma}$ in Equation~(\ref{eq:BR}) where
players choose an action maximising their expected reward. Therefore
the payers are rational, as they always increase or maintain their
reward, given that the other players are rational players too and the
estimates of the other players' strategies are correct.

\begin{equation}
\hat{\sigma}^{i}= \mathop{\rm argmax}_{\sigma^{i} \in \Delta^{i}} \quad
r^{i}(\sigma^{i},\sigma^{-i})
\label{eq:BR}
\end{equation}
\noindent {\bf Remark.} Let us introduce some arbitrary mapping of
$\mathcal A^{i}$ into $M:=\{1,\ldots,|A^{i}|\}$ where $|A^{i}|$, is
the cardinality of $A^{i}$. Denote a generic element of $M$ by $j$. In
other words, the $j$ is an indexing of the elements of $M$
according to some arbitrary but fixed ordering. If a
game has more than one best response, then we consider that players
will always choose the action with the smallest index $j$.

{\em Smooth best response}
$\bar{\sigma}$ in Equation~(\ref{eq:smbr}) is the most common stochastic rule.
\begin{equation}
\bar{\sigma}^{i}(a^{i},\sigma^{-i})=\frac{\exp(r^{i}(a^{i},\sigma^{-i})/\tau)}{\sum_{\tilde{a}^{i}\in
  A^i}\exp(r^{i}(\tilde{a}^{i},\sigma^{-i})/\tau)}
\label{eq:smbr}
\end{equation}
where $\tau$ is a randomisation parameter.

With a relatively large $\tau$, e.g., $\tau=0.01$, smooth best
response allows players to choose actions that do not maximise their
expected reward with non-zero probability. When
players use a small $\tau$, e.g., $0.0001$, they intend to
choose with probability 1 the action that maximises their expected
reward, which is equivalent to best response. A small $\tau$ is
preferred in many practical applications because of its deterministic
nature, which makes the game playing predictable. On the other hand,
it reduces the chances of exploring different actions, which
could lead to find better solutions either in few iterations or with  high probability \cite{May}. In the rest
of the paper, we combine two decision rules together by using
smooth best response with $\tau=0.01$ in the first iteration and best response in other
iterations. This is a very simple annealing scheme~\cite{gwfp} to make
a balance between probabilistic and deterministic behaviour.

Nash in \cite{Nash} proved that every game has at least one
equilibrium which is a fixed point in the best response
correspondence. Thus, when players choose their actions using Equation
(\ref{eq:BR}), a strategy $\hat{\sigma}$ is a Nash equilibrium
if
\begin{equation}
r^{i}(\hat{\sigma}^{i},\hat{\sigma}^{-i})\geq r^{i}(\sigma^{i},\hat{\sigma}^{-i}) \qquad \textrm{for all }  \sigma^{i} \in \Delta^{i}, i=1,\ldots,\mathcal{I}.
\label{eq:nashutil}
\end{equation}
Equation (\ref{eq:nashutil}) implies that if the joint strategy
$,\hat{\sigma}$ is played in a game and it is a Nash equilibrium, then no
player can increase its rewards by unilaterally changing its own
strategy.
When all the players in a game select their actions using pure (mixed, resp.)
strategies, the equilibria are called {\em pure (mixed, resp.) strategy
Nash equilibria}. An equilibrium is called
{\em Pareto efficient} if no other joint mixed strategy can increase the
payoff of all the players. 

%
%
\noindent
{\bf Example.} Equation~(\ref{toy_reward}) shows a simple coordination game
\begin{equation} \label{toy_reward}
r=\kbordermatrix{ &a_1& &a_2\\
b_1&1,1&\vrule&0,0\\
b_2&0,0&\vrule&1,1
}
\end{equation}
In this example, player 1 can choose action $b_1$ or $b_2$, and player
2 can choose $a_1$ or $a_2$. In each row and column intersection the rewards of players 1 and 2 are listed. The maximum reward, and pure Nash
equilibrium of the game, is 1 for both players when they execute the
joint action $(b_1, a_1)$ or $(b_2,
a_2)$. 

\subsection{Basic game-theoretic learning algorithms} \label{sec:algo} 

Iterative learning
algorithms have been introduced as negotiation mechanisms between the
agents. Typically, such an algorithm works as follows \cite{learning_in_games}. Each player
repeatedly plays the game by trying to estimate other players' next move based on the observation of their behaviour history, and then
selecting its own action to maximise its reward. They stop the
iterative assessment if either they converge to an equilibrium
 or the maximum number of iterations is reached.

\noindent 
{\bf Fictitious Play}~\cite{brown_fict} is the canonical example of
game-theoretic learning. Each player $i$ in the initial iteration has
a non-negative weight function $\kappa^{i\rightarrow j}_{0}\in (0,1)$
for the actions of each of its opponents $j$ ($j\in\{1,\ldots,
\mathcal{I}\} \backslash \{i\}$) and update this weight function after
observing player $j$'s action. Note here that the initial weights can
be any arbitrary positive real number but this can significantly influence the
estimations of opponents strategies and for that reason we choose
to normalise the initial weights in order to be between zero and
one. Therefore, we have 
\begin{equation}
\sum_{a^j \in A^j} \kappa_{0}^{i\rightarrow j} (a^j)= 1. \label{eq:kappa0}
\end{equation}

Based on these weights, each player estimates its opponents' strategies and choose an
action that maximises its expected reward.
Let $\kappa_{t}^{i\rightarrow j}(a^j)$ be player $i$'s weight for player
$j$'s action $a^j\in A^j$ at the $t$-th iteration, and $a_t^j\in A^j$ the action chosen by player $j$ at the $t$-th iteration. 
At the $t$-th
iteration, the weight is updated as follows.

\begin{equation}
\kappa_{t}^{i\rightarrow j}(a^{j}) = \kappa_{t-1}^{i\rightarrow j}(a^{j})+\mathbb{I}_{a^j_{t-1}=a^j},
\label{eq:kappa}
\end{equation}
where
$\mathbb{I}_{a^j_t=a^j}=\left\{\begin{array}{cl}1&\mbox{if
$a^j_{t-1}=a^j$}\\0&\mbox{otherwise.}\end{array}\right.$ 

Equation~(\ref{eq:kappa}) simply increases the weight by 1 for the
action executed in the previous iteration.
Fictitious play is based on the implicit assumption that players use
the same strategy in all iterations of the game. Based on this
assumption, player $i$ uses a multinomial distribution to estimate
player $j$'s strategy $\sigma_{t}^{i\rightarrow j}(a^{j})$ by the
following formula:
\begin{equation}
\sigma_{t}^{i\rightarrow j}(a^{j})=\frac{\kappa^{i\rightarrow j}_{t}(a^j)}{\sum_{a' \in A^{j}}\kappa^{i\rightarrow j}_{t}(a')}.
\label{eq:fp1p}
\end{equation}
Player $i$ chooses the action which maximises its expected reward,
based on Equation (\ref{eq:fp1p}) and (\ref{eq:smbr}). 
For Equation (\ref{eq:smbr}),
$\sigma^{-i}=\times_{j=1}^{j=\mathcal{I}\land j\not= i}\sigma^{i\rightarrow j}.$

\noindent {\bf Remark.} We should mention here that $BR$ is a
deterministic decision making rule in the sense that players always
choose the action which maximises their expected reward. Nonetheless,
it can lead to players using mixed strategies when a number of
actions can be played interchangeably in the iterations of the
game. If this is the case, the probability of choosing each of such
actions is always greater than $0$.
 
FP can be seen as the following three step process: (1) observe the
opponents' actions; (2) update the beliefs about the opponents'
strategies; (3) choose an action based on (smooth) best response decision rule.
Equation (\ref{eq:fp1p}) can also be written as: 
\begin{equation} \label{eq:fpwitha}
\begin{split}
\sigma_{t}^{i\rightarrow j}(a^{j}) & = \frac{\kappa^{i\rightarrow
    j}_{t}(a^j)}{t+\sum_{a^{j}\in A^{j}}\kappa^{i\rightarrow
    j}_{1}(a^j)} \\
& = \frac{\kappa^{i\rightarrow
    j}_{t-1}(a^j) + \mathbb{I}_{a^j_{t-1}=a^j}}{t+1} \\
& = \frac{t \cdot \kappa^{i\rightarrow j}_{t-1}(a^j)}{t\cdot
  (t+1)}+\frac{\mathbb{I}_{a^j_{t-1}=a^j}}{t+1} \\
& = \frac{t}{t+1}\sigma_{t-1}^{i \rightarrow j}(a^{j})+\frac{\mathbb{I}_{a^j_{t-1}=a^j}}{t+1}\\
& = (1-\frac{1}{t+1})\sigma_{t-1}^{i \rightarrow j}(a^{j})+ \frac{1}{t+1}\mathbb{I}_{a^j_{t-1}=a^j}.
\end{split}
\end{equation} 

In addition to FP, our probabilistic model checker also supports the following learning
algorithms: 
Geometric fictitious play (GFP) \cite{learning_in_games}, and Adaptive
forgetting factor fictitious play(AFFFP) \cite{mythesis}.
These algorithms can be classified into the same category: 
{\em Algorithms based on weighted average history of actions}. They take into
account all the history of players' actions in order to estimate their
opponents' strategies. Actions are then selected based on these
estimations. The differences of these algorithms are based on the
impact of the recently observed action on the estimation of the
other players' strategies.
They increase the
impact of the recent observation using discount factors when estimating $\sigma^{-i}$.

\noindent
{\bf Example.}
Given the initial estimation in Equation~(\ref{toy_weight}) for the game in (\ref{toy_reward}),
\begin{equation} \label{toy_weight}
\kappa_{0}^{1\rightarrow 2}=[0.511, 0.489]^{T} \mbox{ and } \kappa_{0}^{2 \rightarrow 1}=[0.489, 0.511],
\end{equation}
 we use smooth best response in Equation~(\ref{eq:smbr}) with
 $\tau=10^{-2}$ to compute the initial
 joint strategies, and use best response
 thereafter. When the game is started, we obtain that  
$\sigma_{0}^{1\rightarrow 2}=\sigma_0^{-1}=\kappa_{0}^{1\rightarrow 2}$, $\sigma_{0}^{2 \rightarrow 1}=\sigma_{0}^{-2}=\kappa_{0}^{2 \rightarrow 1}$. 
Using the above weights, equation (\ref{eq:smbr}) and $\tau=0.01$  we obtain $\bar\sigma_{0}^{1}=[0.9, 0.1]^{T}$ and $\bar\sigma_{0}^{2}=[0.1, 0.9]$. This means that player 1 has probability to choose actions 
$b_1$ and $b_2$ with probability $0.9$ and $0.1$
respectively, and player 2 chooses action $a_1$ and $a_2$ with probability $0.1$ and $0.9$
respectively. 

The initial joint strategy is shown in the ellipse node in Figure \ref{eq:dtmctree}.
It means that the
probability of executing $(b_1, a_1)$, $(b_1, a_2)$, $(b_2, a_1)$ and
$(b_2, a_2)$ is $0.09$, $0.81$, $0.01$ and $0.09$
respectively. If the players execute $(b_1, a_1)$, then they update
their weight function using Equation~(\ref{eq:kappa}) and obtain
\begin{equation} \label{succ_weight}
\kappa_{1}^{1\rightarrow 2}=[1.511, 0.489]^{T} \mbox{ and } \kappa_{1}^{2 \rightarrow 1}=[1.489, 0.511].
\end{equation}
The new joint strategy generated using Equations~(\ref{eq:fp1p}) and
(\ref{eq:BR}) is shown in the top left rectangular node in Figure \ref{eq:dtmctree}.


\begin{figure}
\centering
  \includegraphics[scale=0.43]{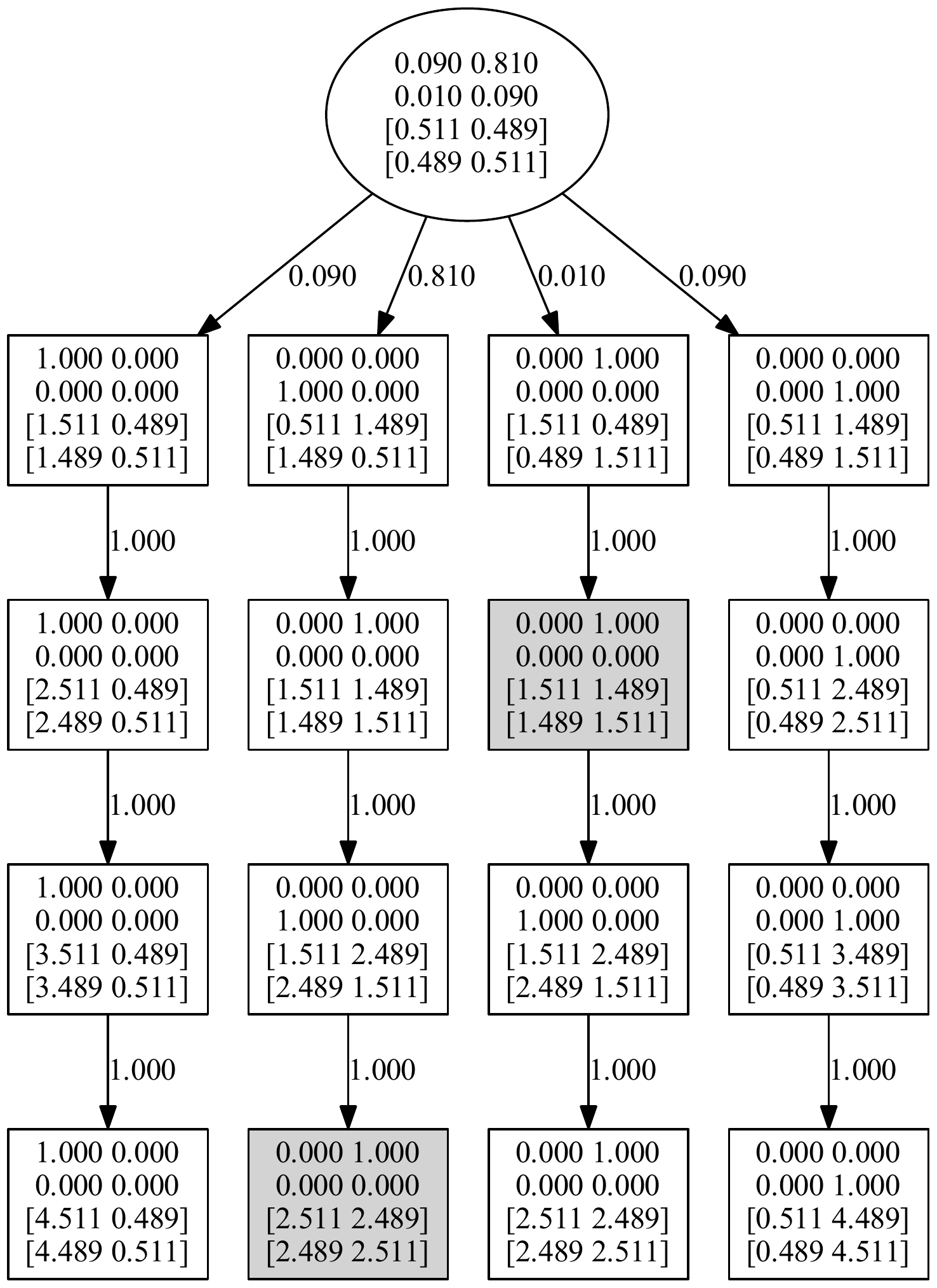}
  \captionof{figure}{Five layer expansions of possible simulation traces.}
  \label{eq:dtmctree}
\end{figure}

\begin{figure}
  \centering
    \includegraphics[scale=0.44]{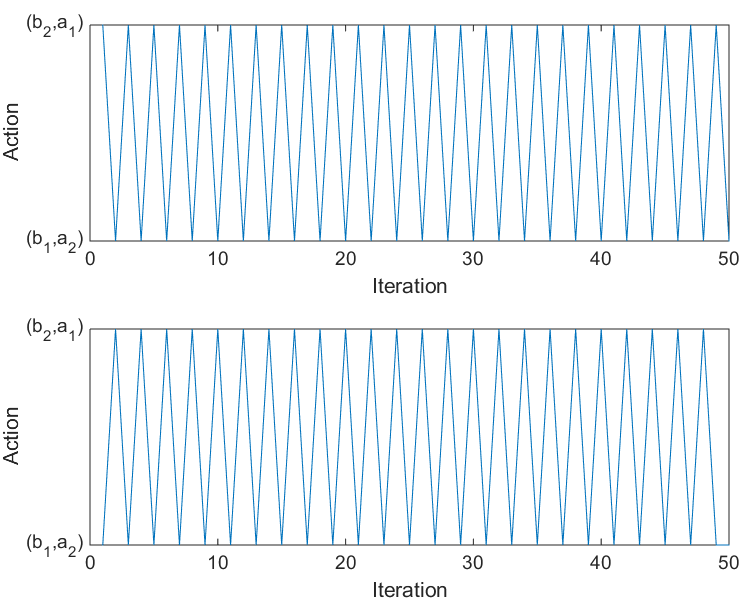}
  \captionof{figure}{Simulations for the game in Equation (\ref{toy_reward})} \label{fig:ex2}
\end{figure}

Figure \ref{fig:ex2} are examples showing that it
is difficult or impossible to understand agents' behaviour from
simulation. Figure \ref{fig:ex2} shows that FP
being trapped in a cycle when the game is repeatedly played for the
initial conditions in Figure \ref{eq:dtmctree}. The top figure
illustrates the selected actions when the branch with initial
probability 0.81 is expanded and the bottom figure the selected
actions when the branch with initial probability 0.01 is expanded. In
this simple case, we can observe that the two simulation runs are
similar. The same cycle of actions is repeated in both figures, but
they have one iteration lag. Thus, in the top figure joint action
$(b_{1},a_{2})$ is played in the odd iterations, while in the bottom
figure it is played in the even iterations. In more complicated cases,
where joint actions are repeated with lag greater than one, or the lag
is unknown, it would be hard or even infeasible to determine if two
simulation runs are similar.

\subsection{Other learning algorithms}

{\bf Geometric Fictitious Play.}
In FP all the previous actions are of the same importance because of
the assumption that opponents' strategies are stationary. Geometric
fictitious play
(GFP) is a variant of FP that addresses this incorrect
assumption by exponential decaying the importance of the historical
data, by a factor $\alpha$ ($0<\alpha<1$). Thus, the recently observed
actions are more important in the estimation of the opponents'
strategies, which is based on the following formula:

\begin{equation}
\sigma_{t}^{i\rightarrow j}(a^{j})=(1-\alpha)\sigma_{t}^{i\rightarrow j}(a^{j})+ \alpha \mathbb{I}_{a^j_{t-1}=a^j},
\label{eq:gfp}
\end{equation}
where $\mathbb{I}_{a^j_{t-1}=a^j}=\left\{\begin{array}{cl}1&\mbox{if
$a^j_{t-1}=a^j$}\\0&\mbox{otherwise.}\end{array}\right.$

{\bf Adaptive Forgetting Factor Fictitious Play.}
Even though GFP addresses the incorrect assumption of FP that the
strategies of other players are constant, it requires the players
to know in advance the rate the other players change their strategy
since it uses a constant $\alpha$ throughout the game. In order to
overcome this limitation, adaptive forgetting factor fictitious play
(AFFFP) was introduced in \cite{mythesis}. In this variant of FP there
is a time varying discount factor $\lambda_{t-1}$ which is adjusted
based on the likelihood of the previously observed actions. Opponents'
strategies are estimated as in Equation (\ref{eq:fp1p}),
but the weights of opponents strategies are discounted as follows:
\begin{equation}
\kappa_{t}^{i\rightarrow j}(a^{j})=\lambda_{t-1}^{j} \kappa_{t-1}^{i\rightarrow j}(a^{j})+ \mathbb{I}_{a^{j}_{t-1}=a^{j}}
\label{eq:afffp_kappa}
\end{equation} 
where $\mathbb{I}_{a^{j}_{t-1}=a^{j}}$ is the same identity function
as in Equation (\ref{eq:kappa}). 
Let $n_{t}^{j}=\sum_{a^{j}\in A^{j}}\kappa_{t}^{i\rightarrow
  j}(a^{j})$ be the normalisation divisor in Equation (\ref{eq:fp1p}). Based on (\ref{eq:afffp_kappa}), we can use the following recursion to evaluate $n^{j}_{t}$:
\begin{displaymath}
n_{t}^{j}=\lambda_{t-1}^{j}n_{t-1}^{j}+1, 
\label{eq:affp_ni}
\end{displaymath} 
where
\begin{displaymath}
\lambda_{t}=\lambda_{t-1}+ \gamma \Big( \frac{1}{\kappa_{t-1}^{i \rightarrow j}(a)}\frac{\partial}{\partial \lambda} \kappa_{t-1}^{i \rightarrow j}(a) - \frac{1}{n^{j}_{t-1}}\frac{\partial}{\partial \lambda} n^{j}_{t-1} \Big),
\label{eq:affp_update}
\end{displaymath}
where $\gamma\in (0, 1]$ is a learning rate parameter which controls the impact of the observed actions in the new value of the adaptive factor $\lambda$, 
\begin{equation*}
\frac{\partial}{\partial \lambda}
  \kappa_{t}^{i\rightarrow j}(a) \vert_{\lambda=\lambda_{t-1}}= \kappa_{t-1}^{i \rightarrow j}(a)+
  \lambda_{t-1}\frac{\partial}{\partial \lambda} \kappa_{t-1}^{i \rightarrow j}(a)
  \vert_{\lambda=\lambda_{t-1}}	
\end{equation*}
 and 
\begin{equation*}
\frac{\partial}{\partial
    \lambda} n^{j}_{t}\vert_{\lambda=\lambda_{t-1}}=n^{j}_{t-1}+
  \lambda_{t-1}\frac{\partial}{\partial \lambda}
  n^{j}_{t-1}\vert_{\lambda=\lambda_{t-1}}. 
\end{equation*}

\subsection{Discrete-Time Markov Chains} \label{sec:dtmc}

Given a set $S$, let
$Dist(S)$ be the set of all discrete probability distributions over
$S$, each of which is of the form $p: S\rightarrow [0, 1]$ such that
$\sum_{s\in S} p(s)=1$.

\begin{definition}
A DTMC $M$ is a tuple $\langle S, \dot{s}, T, \mathcal{P}, L \rangle$, where
\begin{itemize}
\item $S$ is a set of {\em states},
\item $\dot{s}\in S$ is the {\em initial state},
\item $T: S \rightarrow Dist(S)$ is a probabilistic transition
  relation among states. We often write $T_s(s')$ as the probability
  of moving from state $s$ to state $s'$,
\item $\mathcal{P}$ is a set of {\em atomic propositions}, or simply proposition.
\item $L: \mathcal{P}\rightarrow 2^{S}$ is a labelling function that
  maps a proposition to a subset of states. A
  proposition $\tt{p}\in \mathcal{P}$ {\em holds} in state $s\in S$ iff $s\in L({\tt{p}})$.
\end{itemize}
\end{definition}

A {\em finite} path $\rho$ is a sequence of states $s_0s_1,\ldots,s_n$ such
that $T_{s_i}(s_{i+1})>0$ for any $0\le
i<n$. The path $\rho$ is {\em infinite} when $n \to \infty$.
Various properties in DTMCs can be verified by probabilistic model
checking. In this paper, we are interested in {\em steady state}
properties, which specify the behaviour of a DTMC in the long
run, and are captured by {\em bottom strongly connected components} (BSCC).

\begin{definition}
A set of states $\overline{S}\subseteq S$ in a DTMC $M$ is a strongly
connected component (SCC)
iff the following condition holds:
\begin{eqnarray}
\forall s, s'\in \overline{S}: \mbox{ there
  exists a path } \rho=s_0s_1,\ldots,s_n \nonumber \\
  \quad \mbox{ such that } s=s_0
\mbox{ and } s'=s_n.
\end{eqnarray}
\end{definition}

\begin{definition}
A set of states $\overline{S}\subseteq S$ in a DTMC $M$ is a BSCC
iff it is an SCC with the following constraint:
\begin{equation}
\forall s\in \overline{S} \mbox{ and } s'\in S: T_s(s') > 0 \mbox{
  implies that } s'\in \overline{S}.
\end{equation}
\end{definition}
Intuitively, a BSCC, $bscc$ hereafter, is a special strongly connected component
of $M$ such that no states outside $bscc$ can be reached from
any states inside $bscc$ via the probabilistic transition relation. It
means that once the system enters $bscc$, it is trapped in it and
cannot escape. BSCCs in a DTMC can be identified by applying the
Tarjan's algorithm~\cite{tarjan72} to partition the DTMC into SCCs first and
then looking for SCCs that do not have outgoing transitions. 
The model checking algorithms for computing steady-state probabilities
from BSCCs can be found in~\cite{KNP04a}.

\begin{theorem}
A Nash equilibrium is a BSCC. 
\end{theorem}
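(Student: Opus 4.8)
We may assume the equilibrium in question is a pure strategy profile $a^{*}=(a^{*1},\dots,a^{*\mathcal I})$, since under the best response rule (\ref{eq:BR}) the only joint strategies actually realised as play are pure (a mixed equilibrium would require the dynamics to reproduce the mixture, which best response does not do except through the tie-breaking rule). In the DTMC, $a^{*}$ is ``observed'' at a state $s$ when $a^{*}$ is the joint action selected at $s$; let $U$ be the set of states reachable once $a^{*}$ has been selected for the first time. The plan is to establish (i) that $U$ is closed under the transition relation, i.e. $s\in U$ and $T_{s}(s')>0$ imply $s'\in U$, and (ii) that after quotienting by the behaviour similarity relation $\preceq$, $U$ is strongly connected; by the definitions of SCC and BSCC, (i) and (ii) together say that $U$ is a BSCC.

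The heart of the matter is the invariance lemma: \emph{if $a^{*}$ is selected at iteration $t$, then $a^{*}$ is selected at iteration $t+1$.} Fix a player $i$ and let
\[
B^{i}=\{\,\sigma^{-i}\ :\ r^{i}(a^{*i},\sigma^{-i})\ge r^{i}(b^{i},\sigma^{-i})\ \text{for all}\ b^{i}\in A^{i}\,\}
\]
be the set of opponent profiles against which $a^{*i}$ is a best response. Each defining inequality is linear in the coordinates of $\sigma^{-i}$ (because $r^{i}$ is bilinear), so $B^{i}$ is convex. Since $a^{*}$ is a Nash equilibrium, the profile that plays $a^{*-i}$ with probability one lies in $B^{i}$ by (\ref{eq:nashutil}); since $a^{*i}$ was in fact selected at iteration $t$, the current estimate $\sigma_{t}^{-i}$ also lies in $B^{i}$. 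The FP update in the form (\ref{eq:fpwitha}) expresses the new marginal $\sigma_{t+1}^{i\to j}$ as a convex combination of $\sigma_{t}^{i\to j}$ and the indicator of $a^{*j}$, so (for two players) the updated estimate sits on the segment joining $\sigma_{t}^{-i}$ to the probability-one profile on $a^{*-i}$, hence stays in the convex set $B^{i}$, and $a^{*i}$ remains a best response. The GFP update (\ref{eq:gfp}) and, after normalisation, the AFFFP update (\ref{eq:afffp_kappa}) are convex combinations of exactly the same shape, so the lemma transfers to those algorithms. Iterating the lemma from the first state at which $a^{*}$ is selected yields that $a^{*}$ is selected at every later iteration, which is precisely the closure statement (i).

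For (ii), note that in the raw DTMC the states in $U$ are pairwise distinct — the weights $\kappa^{i\to j}_{t}$ keep increasing — so $U$ is literally an infinite path and is not strongly connected on the nose; this is exactly what the behaviour similarity relation is for. Every state of $U$ induces the same eternal play $a^{*}a^{*}\cdots$, so all states of $U$ are $\preceq$-equivalent and collapse to a single state with a probability-one self-loop, which is trivially an SCC with no outgoing edges and carries the proposition identifying $a^{*}$; that is a BSCC.

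The step I expect to be the main obstacle is making the invariance lemma airtight against the index-minimal tie-breaking rule and for more than two players. First, ``$a^{*i}$ is a best response'' is weaker than ``$a^{*i}$ is selected'', so one must exclude that an action of smaller index becomes a best response at $t+1$; this is immediate for a strict equilibrium (the estimate moves strictly towards the probability-one profile on $a^{*-i}$, so $a^{*i}$ is eventually the unique best response), and for the non-strict case it needs the extra observation that the set of best responses to the probability-one profile on $a^{*-i}$ is itself invariant under the update, so its index-minimal member is selected consistently from iteration $t+1$ onwards. Second, when $\mathcal I\ge 3$ the set of \emph{product} opponent profiles is not convex and $\sigma_{t+1}^{-i}=\prod_{j\ne i}\sigma_{t+1}^{i\to j}$ does not literally lie on the segment above (the product introduces cross terms), so the convexity argument must be replaced by a direct estimate along the explicit curve traced by the update, once more using strictness (or asymptotics as $t\to\infty$) to conclude. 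Granting these points, $U$ quotients to a BSCC and the theorem follows.
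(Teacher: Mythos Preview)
Your argument is correct at the level of a proof sketch, but it takes a genuinely different route from the paper. The paper's proof is a two-line contradiction: suppose the state $s$ representing the Nash equilibrium strategy $\sigma$ is not in a BSCC; then some transition reaches a state with a different joint strategy $\tilde{\sigma}$, which contradicts the defining property that no player deviates from a Nash equilibrium. That is the whole proof.

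Your approach is constructive and considerably more detailed. You isolate an invariance lemma (if $a^{*}$ is selected at $t$ then at $t+1$), prove it via convexity of the best-response region $B^{i}$ together with the explicit form of the FP/GFP/AFFFP updates, and then use the behaviour similarity relation to collapse the resulting infinite absorbing ray into a single self-looping state. This buys you two things the paper's proof elides: first, you actually argue why the \emph{updated} estimate $\sigma_{t+1}^{-i}$ still makes $a^{*i}$ a best response, whereas the paper tacitly identifies ``best response to $\hat{\sigma}^{-i}$'' with ``best response to the current estimate''; second, you make explicit that the raw state space is an infinite chain (because $\pi$ keeps changing) and that one needs $\preceq$ to obtain a finite BSCC. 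You also surface the tie-breaking and $\mathcal{I}\ge 3$ product-measure subtleties, which the paper does not address. What the paper's approach buys is brevity: by treating a state as essentially its joint strategy and appealing directly to the non-deviation clause of (\ref{eq:nashutil}), the contradiction is immediate and none of your machinery is needed.
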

\begin{proof}
Consider a state $s$ that represents the joint strategy $\sigma$
which is Nash equilibrium. If we assume that $s$ is not in a BSCC, then a
new state $\tilde{s}$ will be reached with a non-zero probability that
will represent a new joint strategy $\tilde{\sigma}$. This is a
contradiction because a joint strategy $\sigma$ is a Nash
equilibrium iff no player will deviate from $\sigma$.  
\end{proof}

Note here that a BSCC is not necessarily a Nash equilibrium.
A well known example
\cite{learning_in_games} comes from the FP algorithm, when it is used in the game 
in Equation~(\ref{toy_reward}). It has been shown that FP can
fail to converge in the Nash equilibrium in this
game and be trapped in a cycle between the actions $(b_{2}, a_{1})$
and $(b_{1}, a_{2})$, which is also depicted in Figure \ref{fig:fp}. This is 
a BSCC, but not a Nash equilibrium. 
%
For a Pareto efficient equilibrium, all states in the BSCC have the
same reward. However, it is not true for a non-Pareto efficient
one. In this case, we are interested in the probability by which the
system stays in some of the states in the BSCC. In the example defined
in Equation~(\ref{toy_reward}), the Pareto efficient equilibria are captured by
single-state BSCCs that fire $(b_1, a_1)$ or $(b_2, a_2)$ only, i.e.,
with probability 1.

\section{Analysis of game-theoretic learning algorithms} \label{sec:framework}

We aim to build a framework for 
performance
evaluation of game-theoretic learning algorithms. It explores all
states an algorithm can visit.  Each state contains a joint strategy
$\sigma$ generated from the Cartesian product of the individual
strategy of all players,  
and extra
information $\pi$, which varies among the learning algorithms.

\subsection{State space generation} \label{subsec:state-gen}

\begin{definition} 
A state $s$ is a tuple
$\langle\sigma,\mathbb{\pi} \rangle$, where
\begin{itemize}
\item
$\sigma$ is the joint strategy of all players, which indicates the likelihood of firing a joint action,
\item
$\mathbb{\pi}$ is a vector of parameters, including weights and other
  input variables for the learning algorithm.
\end{itemize}
Let $\kappa^i=\{\kappa^{i\rightarrow j}\mid j\in\{1,\ldots, \mathcal{I}\} \backslash \{i\}\}$ be the weights that player $i$ holds  about all its
opponents.
The vector $\mathbb \pi$ for each learning algorithm is specified as
follows:
\begin{itemize}
\item FP: $\mathbb \pi =\langle \kappa^{1},\ldots,\kappa^{\mathcal{I}} \rangle$, 
\item GFP: $\mathbb \pi =\langle \kappa^{1},\ldots,\kappa^{\mathcal{I}}, \alpha \rangle$, 
\item AFFFP: $\mathbb \pi =\langle \kappa^{1},\ldots,\kappa^{\mathcal{I}}, n_{1},\ldots,n_{\mathcal{I}}, \lambda,\frac{\partial}{\partial \lambda} \kappa^{1}, \ldots,\frac{\partial}{\partial \lambda} \kappa^{\mathcal{I}}$, $\frac{\partial}{\partial \lambda} n_{1},\ldots,\frac{\partial}{\partial \lambda} n_{\mathcal{I}} \rangle$,
\end{itemize}
\end{definition}

Given a game and an algorithm, our probabilistic model checker adopts Breadth-First Search (BFS)
to explore the state space from the initial state $\dot{s} =
\langle\dot{\sigma}, \dot{\pi}\rangle$ by firing each joint action
with non-zero firing probability and computing a new state. The
exploration algorithm is shown in
Algorithm~\ref{algo2}. 
As we are interested in
short-term behaviour, an upper bound on the depth in BFS can be set to
terminate the exploration if it does not stop within the bound. If
this is the case, then all unexplored states have a transition to a
special termination state $\bot$ with probability 1.

\begin{algorithm} \caption{State space exploration} 
\label{algo2}
\begin{algorithmic}[1]
\STATE{$s_0 := \langle\dot{\sigma}, \dot{\pi}\rangle$;
  $S:=\{s_0\}$; $\mbox{enqueue}(queue1,s_0)$}
\STATE{$depth:=0$; $i:=1$}
\WHILE{$queue1 \not= \emptyset$}
\STATE{$(s=\langle\sigma, \pi \rangle):=\mbox{dequeue}(queue1)$}
\FORALL{$a\in A$}
  \IF{$\sigma(a) > 0$}
    \STATE{$\langle \sigma'', \pi' \rangle :=
      \mbox{UpdateEstimations}(\pi, a)$}
    \STATE{$\sigma' := \mbox{UpdateProbDist}(u, \sigma'', \tau)$}
    \STATE{$s_i := \langle\sigma', \pi'\rangle$; $found := false$}
    \FORALL{$j= i-1,\cdots, 1$}
      \IF{$s_j\preceq s_i$}
        \STATE{$T_s(s_j) := \sigma(a)$; $found := true$; {\bf break}}
      \ENDIF
    \ENDFOR
    \IF{$found = false$}
      \STATE{$T_s(s_i) := \sigma(a)$; $S:=S\cup \{s_i\}$}
      \STATE{$\mbox{enqueue}(queue2,s_i)$; $i:=i+1$}
    \ENDIF
  \ENDIF
\ENDFOR
\STATE{$queue1:=queue2$; $queue2:=\emptyset$; $depth:=depth+1$}
\IF{$depth\ge {\tt UpperBound}$}
\STATE{$S:=S\cup \{\bot\}$; $T_{\bot}(\bot) := 1$}
\FORALL{$s\in queue1$}
  \STATE{$T_s(\bot) := 1$}
\ENDFOR
\STATE{{\bf break}}
\ENDIF
\ENDWHILE
\RETURN $(S,~T)$
\end{algorithmic}
\end{algorithm}

In Algorithm~\ref{algo2}, $queue1$ is used to store
states that need to be expanded, and $queue2$ stores their success
states. This is an easy way to count the depth of the
exploration. $\mbox{Enqueue} (queue, s)$ appends $s$ to the tail of
$queue$ and $s:=\mbox{dequeue}(queue)$ removes the head of $queue$ and
save it to $s$.
$\mbox{UpdateEstimations}(\pi, a)$ 
performs the estimation steps of the learning
algorithms, and $\mbox{UpdateProbDist}(u, \sigma'', \tau)$ is the
smooth best response decision rule defined in
Equation~(\ref{eq:smbr}) or best response in Equation~(\ref{eq:BR}). The probability of a transition executing
action $a$ comes from $\sigma(a)$. 

In principle, there could be
a large, even infinite, number of states for certain algorithms due to
the way of updating the parameters of these algorithms. 
For example, the simulation in Figure~\ref{eq:dtmctree}
can be extended to infinite iterations.
However, the same behaviour
can be observed from the two grey nodes in this figure according to
Equation~(\ref{eq:kappa}) and (\ref{eq:fp1p}), although these two nodes are not
the same because of the weight functions. Thus, these nodes share some similarity between them,
and there is no need to explore both nodes because of the
same behaviour. Exploration of one node is sufficient.
This observation leads to the key novelty in
our stochastic model checker: {\em behaviour similarity relation}
among states, which always results in a succinct model and
makes the verification possible.

\subsection{Behaviour similarity relation} \label{subsec:similarity}

%


\begin{definition}\label{def:similarity_FP}
For two state $s_1=\langle \sigma_1, \pi_1\rangle$ and $s_2=\langle
\sigma_2, \pi_2\rangle$ such as $s_1$ is generated earlier than
$s_2$ by Algorithm~\ref{algo2} such that $s_1$ is generated at the
$t_1$-th iteration and $s_2$ at the $t_2$-th iteration ($t_1\le t_2$), $s_1$ is behaviour similar to $s_2$
under FP, 
GFP and AFFFP with best response, denoted as
$s_1\preceq s_2$, if the following condition holds:
\begin{equation}
s_1\preceq s_2\equiv (\sigma_1 = \sigma_2)
\land \Big(\bigwedge_{1\le i\in \mathcal{I}}r_{1, t_1}^i\precsim r_{2, t_2}^i\Big),
\end{equation}
where $r_{k, t_k}^i$ ($k=1,\ 2$) is the expected rewards that player $i$ has
over its actions $A^i$.

Let $\sigma_k^i$ ($k=1,2$) be the strategy of player $i$ in state $s_k$, 
$a^i_{t_k}\in A^i$ the action chosen to be executed in $s_k$, and
$a_{t_k}\in A$ the joint action executed in $s_k$. Note
that $a_{t_1} = a_{t_2}$, and therefore, $a^i_{t_1} = a^i_{t_2}$.
Given a path $\rho=s_0'\cdots s_m'$ such that $s_0'$ is obtained at the
$t_0'$-th iteration, let $\omega(\rho)=a_{t'_0}\cdots a_{t'_0+m-1}$ be the
sequence of joint actions such that $a_{t_0'}$ is executed in $s_j'$
($0\le j\le m-1$).
Let $\check{s}_k$ be the predecessor of $s_k$, $\check{\sigma}_k$ the probability
distribution in $\check{s}_k$, and $\check{a}_k$ the joint action
selected in $\check{s}_k$. 
We define $r_{1,t_1}^i\precsim r_{2, t_2}^i$ iff all the following conditions holds.
\begin{enumerate}
\item[(\ref{def:similarity_FP}.1)] 
\begin{eqnarray}
\mathop{\rm argmax}_{a^{i} \in A^i} \quad r^{i}_{1,t_{1}}(a^i)-r^{i}_{1,t_{1}-1}(a^i)= \nonumber \\
\quad \mathop{\rm argmax}_{a^{i} \in A^i} \quad r^{i}_{2,t_{2}}(a^i)-r^{i}_{2,t_{2}-1}(a^i) 
\label{cond1}
\end{eqnarray}

\item[(\ref{def:similarity_FP}.2)] 
If $\check{\sigma}_1=\check{\sigma}_2=\sigma_1$, then 
$\forall a^i\in
A^i$ such that $\sigma_k^i(a^i)=0$, we either have 
\begin{equation}
r_{2, t_2}^i(a^i)-r_{1,
  t_1}^i(a^i)\le r_{2, t_2}^i(a^i_{t_2})-r_{1, t_1}^i(a^i_{t_1}) \label{cond2}
\end{equation} 
or
\begin{equation}
r^i(a^i_{t_1}, a^{-i}_{t_1})=\max_{a^i\in A^i}{r^i(a^i, a^{-i})}. \label{cond3}
\end{equation}
 
\item [(\ref{def:similarity_FP}.3)]
If there is a path $\rho=s_0'\cdots s_n'$ such that $s_0'=s_1$ and
$s_n'=s_2$ and $n\ge 1$, then $\forall j\in \{1,\cdots, n\}$,
\begin{eqnarray}
\sigma_{1, j}=\sigma_{2, j}\land \bigwedge\limits_{1\le i\le \mathcal{I}}r_{2, t_2}^i(a^i_{t_2+j})\le
  r_{1, t_1}^i(a^i_{t_1+j}), \label{cond4} \\ \textrm{ when FP is used} \nonumber
\end{eqnarray}
\begin{eqnarray}
  \sigma_{1, j}=\sigma_{2, j}\land \bigwedge\limits_{1\le i\le \mathcal{I}}r_{2, t_2}^i(a^i_{t_2+j})\ge
  r_{1, t_1}^i(a^i_{t_1+j}), \label{cond6} \\ \textrm{ when GFP and AFFFP 
is used} \nonumber
\end{eqnarray}
where $\sigma_{k, j}$ ($k=1,2$) is the joint strategy obtained by executing a
sequence of joint actions $\omega(s_0'\cdots s_j')$ from $s_k$.

\item [(\ref{def:similarity_FP}.4)]
If $n=1$, then
\begin{equation}
\bigwedge_{1\le i\le \mathcal{I}}r_{2, t_2}^i(a^i_{t_2})\ge r_{1, t_1}^i(a^i_{t_1}). \label{cond5}
\end{equation} 

\item [(\ref{def:similarity_FP}.5)]
If there is no path from $s_{1}$ to $s_{2}$, then
$\check{\sigma}_1=\check{\sigma}_2$ and
\begin{eqnarray}
\bigwedge_{1\le i\in \mathcal{I}}r_{2, t_2}^i(a^i_{t_2})\ge r_{1,
  t_1}^i(a^i_{t_1}), \label{cond7} \\ 
\nonumber \textrm{ when FP is used}
\end{eqnarray}
\begin{eqnarray}
\bigwedge_{1\le i\in \mathcal{I}} \Big(r_{2, t_2}^i(a^i_{t_2})\ge r_{1,
  t_1}^i(a^i_{t_1}) \land \\
\nonumber \bigwedge\limits_{a^i\in A^i, a^i\not=a^i_{t_1}} r_{1,t_1}^i(a^i)<
r_{2, t_2}^i(a^i)\Big),\label{cond8} 
\end{eqnarray}
 $\quad \quad \quad$ when GFP and AFFFP 
is used.


\end{enumerate}
\end{definition}

\subsection{Linearity of expected reward in FP type algorithms.} \label{subsec:linearity}
In this subsection, we show the linear relation of the expected rewards of a player when Fictitious Play based
learning algorithms are used. The results of this section will be used in developing the behaviour similarity
relation for these algorithms. 

Given two states $s_1$ and $s_2$ with
$\sigma_1=\sigma_2$, where the same action
$a$ is played in both states of $s_{1}$ and
$s_{2}$. Assume $s_1$ and $s_2$ are generated at the $t_1$-th and
$t_2$-th iteration respectively with $t_1\le t_2$. Updates of player $i$'s estimations of its opponents'
strategies in FP, GFP, and AFFFP 
will be the following, $\forall j \in \mathcal{I}$:
\begin{itemize}
\item Fictitious play updates:\\ 
\begin{displaymath}
\sigma^{i \rightarrow j}_{t_{1}+1}=\left\lbrace\begin{array}{ll}
(1-\frac{1}{t_{1}+2})\sigma^{i\rightarrow j}_{t_{1}}(a^{j})+\frac{1}{t_{1}+2} & \textrm{ if } a^{j} \in a \\
(1-\frac{1}{t_{1}+2})\sigma^{i\rightarrow j}_{t_{1}}(a^{j}) & \textrm{ if } a^{j} \not \in a
\end{array} \right. ,
\end{displaymath}
\begin{equation}
\sigma^{i \rightarrow j}_{t_{2}+1}=\left\lbrace\begin{array}{ll}
(1-\frac{1}{t_{2}+2})\sigma^{i\rightarrow j}_{t_{2}}(a^{j})+\frac{1}{t_{2}+2} & \textrm{ if } a^{j} \in a \\
(1-\frac{1}{t_{2}+2})\sigma^{i\rightarrow j}_{t_{2}}(a^{j}) & \textrm{ if } a^{j} \not \in a
\end{array} \right.
\label{prfp1}
\end{equation} 
\item Geometric fictitious play updates:\\ 
\begin{displaymath}
\sigma^{i \rightarrow j}_{t_{1}+1}=\left\lbrace\begin{array}{ll}
(1-\alpha)\sigma^{i\rightarrow j}_{t_{1}}(a^{j})+\alpha & \textrm{ if } a^{j} \in a \\
(1-\alpha)\sigma^{i\rightarrow j}_{t_{1}}(a^{j}) & \textrm{ if } a^{j} \not \in a
\end{array} \right. ,
\end{displaymath}
\begin{equation}
\sigma^{i \rightarrow j}_{t_{2}+1}=\left\lbrace\begin{array}{ll}
(1-\alpha)\sigma^{i\rightarrow j}_{t_{2}}(a^{j})+\alpha & \textrm{ if } a^{j} \in a \\
(1-\alpha)\sigma^{i\rightarrow j}_{t_{2}}(a^{j}) & \textrm{ if } a^{j} \not \in a
\end{array} \right.
\label{prgfp1}
\end{equation}
\item Adaptive forgetting factor fictitious play updates:\\ 
\begin{displaymath}
\sigma^{i \rightarrow j}_{t_{1}+1}=\left\lbrace\begin{array}{ll}
(1-\alpha_{t_{1}})\sigma^{i\rightarrow j}_{t_{1}}(a^{j})+\alpha_{t_{1}} & \textrm{ if } a^{j} \in a \\
(1-\alpha_{t_{1}})\sigma^{i\rightarrow j}_{t_{1}}(a^{j}) & \textrm{ if } a^{j} \not \in a
\end{array} \right. ,
\end{displaymath}
\begin{equation}
\sigma^{i \rightarrow j}_{t_{2}+1}=\left\lbrace\begin{array}{ll}
(1-\alpha_{t_{2}})\sigma^{i\rightarrow j}_{t_{2}}(a^{j})+\alpha_{t_{2}} & \textrm{ if } a^{j} \in a \\
(1-\alpha_{t_{2}})\sigma^{i\rightarrow j}_{t_{2}}(a^{j}) & \textrm{ if } a^{j} \not \in a
\end{array} \right.
\label{prafffp1}
\end{equation}
\end{itemize}

The expected reward of each action $a^{i}$ for state $s_{i}$ $(i
=1,2)$ is computed as follows when joint action $a$ has been played: 
\begin{equation}
r^{i}_{t_{i}}(a^{i}, \sigma_{t_{i}}^{-i}(a^{-i}))=\sum_{a^{j} \in
 A^{-i}} \big(r(a^{i},a^{j})\times\prod_{k:a^{k}\in
 a^{j}}\sigma_{t_{i}}^{i \rightarrow k}(a^{k})\big). \label{eq:pr}
\end{equation}
The elements of $\prod_{k:a^{k}\in a^{j}}\sigma_{t_{i}}^{i \rightarrow
 k}(a^{k})$ for FP, GFP and AFFFP are updated the above stated
estimations. 
The following Lemmas can be induced:
\begin{lemma}
\label{Lem:lin1}
Assume state $s$ is generated in the
$t$-{th} iteration of a learning process and the joint action $a$ is
played in $s$. Let
$a^{-i}(j)$ be the action of player $j$ ($j\not=i$) in $a^{-i}$,
 and $\sigma^{i \rightarrow j}_{t}(a^{-i}(j))$ the
element of $\sigma^{i\rightarrow j}_{t}$ which corresponds to the
action $a^{-i}(j)$. For all actions in $A^i\setminus \{a^{-i}(j)\}$, the rank among them in
$\sigma^{i\rightarrow j}_{t}$ is maintained in
$\sigma^{i\rightarrow j}_{t+1}$, i.e., if
$\sigma^{i\rightarrow
  j}_{t}(a^{j}_1)>\sigma^{i\rightarrow
  j}_{t}(a^{j}_2)$, then $\sigma^{i\rightarrow
  j}_{t+1}(a^{j}_1)>\sigma^{i\rightarrow
  j}_{t+1}(a^{j}_2)$, $\forall a^j_1, a^j_2 \in
A^{j} \setminus \{a^{-i}(j)\}$.
\end{lemma}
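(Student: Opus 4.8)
The plan is to deduce the claim directly from the three update rules in Equations~(\ref{prfp1}), (\ref{prgfp1}) and (\ref{prafffp1}), by noting that on the non-played coordinates of $\sigma^{i\rightarrow j}$ they all act as multiplication by one and the same positive scalar. Concretely, the action player $j$ executes when leaving $s$ is exactly its component $a^{-i}(j)$ of the joint action $a$; for every \emph{other} action $a^{j}\in A^{j}\sminus{a^{-i}(j)}$ the update has the form $\sigma^{i\rightarrow j}_{t+1}(a^{j}) = c_t\,\sigma^{i\rightarrow j}_{t}(a^{j})$, where $c_t = 1-\tfrac{1}{t+2}$ for FP, $c_t = 1-\alpha$ for GFP, and $c_t = 1-\alpha_t$ for AFFFP, and crucially $c_t$ does not depend on $a^{j}$.

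First I would record that $c_t>0$ in every case. For FP this is $c_t = \tfrac{t+1}{t+2}>0$; for GFP it is immediate from $\alpha\in(0,1)$; for AFFFP I would use the recursions $n^{j}_{t} = \lambda^{j}_{t-1}n^{j}_{t-1}+1$ and $\alpha_t = 1/(\lambda^{j}_{t-1}n^{j}_{t-1}+1)$ to write $c_t = \lambda^{j}_{t-1}n^{j}_{t-1}/(\lambda^{j}_{t-1}n^{j}_{t-1}+1)$, which is strictly positive because the forgetting factor and the normaliser stay strictly positive throughout the run: the initial weights are positive by Equation~(\ref{eq:kappa0}) and the recursions only add nonnegative amounts.

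Then the proof closes in one line. For $a^{j}_1, a^{j}_2\in A^{j}\sminus{a^{-i}(j)}$ with $\sigma^{i\rightarrow j}_{t}(a^{j}_1)>\sigma^{i\rightarrow j}_{t}(a^{j}_2)$, multiplying both sides by the positive constant $c_t$ preserves the strict inequality, so $\sigma^{i\rightarrow j}_{t+1}(a^{j}_1) = c_t\,\sigma^{i\rightarrow j}_{t}(a^{j}_1) > c_t\,\sigma^{i\rightarrow j}_{t}(a^{j}_2) = \sigma^{i\rightarrow j}_{t+1}(a^{j}_2)$, which is exactly the asserted rank preservation.

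I do not anticipate a genuine obstacle: once the update rules are rewritten in the common multiplicative form, the lemma is a trivial monotonicity fact, and the only thing worth spelling out is the positivity of the shared multiplier $c_t$ — elementary for FP and GFP, and a consequence of $\lambda^{j}_{t-1},n^{j}_{t-1}>0$ for AFFFP. (One small correction I would make in passing: the index set in the statement should read $A^{j}\sminus{a^{-i}(j)}$ rather than $A^{i}\sminus{a^{-i}(j)}$, since $\sigma^{i\rightarrow j}$ is a distribution over player $j$'s actions.)
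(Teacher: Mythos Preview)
Your proposal is correct and follows the same idea as the paper's own proof: on the non-played coordinates the update is multiplication by a common positive scalar, and multiplying by a positive constant preserves strict inequalities. The paper's argument is terser (it writes out only the FP case with factor $1-\tfrac{1}{t}$), so your version is in fact a bit more complete in covering GFP and AFFFP explicitly; your observation that the index set should read $A^{j}\setminus\{a^{-i}(j)\}$ is also well taken.
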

\begin{proof}
Player $i$ updates its estimates about the
strategies of its opponents using $a^{-i}$. Then $\sigma_{t+1}^{i\rightarrow j}(a^{-i}(j))$ will
be increased by $\frac{1}{t}(1-\sigma_{t}^{i\rightarrow
  j}(a^{-i}(j)))$ and for all $a^j\in A^j\setminus \{a^{-i}(j)\}$, the estimate
$\sigma_{t+1}^{i\rightarrow
  j}(a^{j})$ will be
decreased by $\frac{1}{t}\sigma_{t}^{i\rightarrow
  j}(a^{j})$. For any two numbers $x$ and $y$, we have
$x>y\land t>0
\Leftrightarrow (1-\frac{1}{t})x> (1-\frac{1}{t})y$.
\end{proof}

\begin{lemma}
\label{Lem:lin2}
Assume that the same joint action $a$ is played in any two consecutive
iterations, i.e., $t$-th and $(t+1)$-th iterations, using
FP, GFP or AFFFP. Given two actions $x^{i}\in A^i$ and $y^{i}\in A^i$
($y^i\not= x^i$) of player
$i$ of which $x^{i}$ is the best response to $a$, let 
\begin{eqnarray}{rcl}
\nonumber
w^{\dagger} & = & r^{i}_{t+1}(x^{i}, \sigma^{-i}_{t+1}(x^{-i}))-r^{i}_{t}(x^{i},\sigma^{-i}_{t}(x^{-i})),\\
\nonumber
w^{\ddagger} & = & r^{i}_{t+1}(y^{i},\sigma^{-i}_{t+1}(y^{-i}))-r^{i}_{t}(y^{i},\sigma^{-i}_{t}(x^{-i})),\\
\nonumber
\bar{w}^{\dagger} & = & r^{i}_{t+2}(x^{i},\sigma^{-i}_{t+2}(x^{-i}))-r^{i}_{t+1}(x^{i},\sigma^{-i}_{t+1}(x^{-i})),\\
\nonumber
\bar{w}^{\ddagger} & = & r^{i}_{t+2}(y^{i},\sigma^{-i}_{t+2}(y^{-i}))-r^{i}_{t+1}(y^{i},\sigma^{-i}_{t+1}(y^{-i})).
\end{eqnarray}
We have $w^{\dagger}>w^{\ddagger}$ and $\bar{w}^{\dagger}>\bar{w}^{\ddagger}$.
\end{lemma}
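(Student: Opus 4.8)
The plan is to exploit the multilinear dependence of the expected rewards on the opponent-strategy estimates recorded in Section~\ref{subsec:linearity}. When the joint action $a$ is played in iteration $t$, each estimate $\sigma^{i\rightarrow j}_{t}$ is mixed towards the point mass on player $j$'s component of $a$ with a weight $\beta_{t}\in(0,1)$ ($\beta_{t}=\frac{1}{t+2}$ for FP, $\beta_{t}=\alpha$ for GFP, $\beta_{t}=\alpha_{t}$ for AFFFP), exactly as in Equations~(\ref{prfp1})--(\ref{prafffp1}). Substituting this into Equation~(\ref{eq:pr}) I would express $r^{i}_{t+1}(b^{i})$, for an arbitrary action $b^{i}$, in terms of time-$t$ data only; in the two-player case this collapses to the convex combination $r^{i}_{t+1}(b^{i})=(1-\beta_{t})\,r^{i}_{t}(b^{i})+\beta_{t}\,r^{i}(b^{i},a^{-i})$, and for $\mathcal{I}>2$ one obtains the same identity with extra ``partially clamped'' reward terms that are likewise affine increments controlled by $\beta_{t}$, handled by the same expansion.

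Write $g(c^{-i}):=r^{i}(x^{i},c^{-i})-r^{i}(y^{i},c^{-i})$ and $h(\tau):=r^{i}_{\tau}(x^{i})-r^{i}_{\tau}(y^{i})$, so that $h(\tau)=\sum_{c^{-i}\in A^{-i}}g(c^{-i})\,\sigma^{-i}_{\tau}(c^{-i})$ is the expectation of $g$ under the current estimate. In the two-player case the formula above gives $w^{\dagger}-w^{\ddagger}=\beta_{t}\big(g(a^{-i})-h(t)\big)$; since the $\bar{w}$ quantities are driven by the action played in iteration $t+1$, which is again $a$, one gets $\bar{w}^{\dagger}-\bar{w}^{\ddagger}=\beta_{t+1}\big(g(a^{-i})-h(t+1)\big)$. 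Moreover $h(t+1)=(1-\beta_{t})\,h(t)+\beta_{t}\,g(a^{-i})$ lies strictly between $h(t)$ and $g(a^{-i})$ when $\beta_{t}\in(0,1)$, so $h(t)<g(a^{-i})$ already forces $h(t+1)<g(a^{-i})$. Thus both inequalities of the lemma follow from the single claim $h(t)<g(a^{-i})$, i.e.\ $r^{i}_{t}(x^{i})-r^{i}_{t}(y^{i})<r^{i}(x^{i},a^{-i})-r^{i}(y^{i},a^{-i})$; for $\mathcal{I}>2$ I would establish the analogous monotonicity step using the cross-terms from the first paragraph.

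It remains to prove $h(t)<g(a^{-i})$, and this is the step I expect to be the main obstacle. The argument splits in two. First, because $x^{i}$ is the best response to $a$, the opponent action $a^{-i}$ should maximise $c^{-i}\mapsto g(c^{-i})$ over $A^{-i}$; verifying this is where one genuinely has to use the payoff functions and the tie-breaking convention of Equation~(\ref{eq:BR}) (for the coordination games used as running examples here it is immediate from the reward matrix, and Lemma~\ref{Lem:lin1} is available to track how coordinate orderings in the estimates, and hence the relevant reward comparisons, propagate along the iterations). Second, the FP, GFP and AFFFP estimates start from strictly positive initial weights (Equation~(\ref{eq:kappa0})) and retain full support, so $\sigma^{-i}_{t}$ is never a point mass on a maximiser of $g$; hence $h(t)=\mathbb{E}_{\sigma^{-i}_{t}}[g]$ is \emph{strictly} below its maximum $g(a^{-i})$. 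Combining the two with the reduction of the second paragraph yields $w^{\dagger}>w^{\ddagger}$ and $\bar{w}^{\dagger}>\bar{w}^{\ddagger}$; the only remaining labour is the bookkeeping of the multi-player cross-terms and re-checking the chain of inequalities for GFP and AFFFP, where $\beta_{t}$ is constant or time-varying rather than $\frac{1}{t+2}$.
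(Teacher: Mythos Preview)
Your algebraic reduction to the single inequality $h(t)<g(a^{-i})$ is clean, but the step you yourself flag as ``the main obstacle'' is not merely hard---it is false as stated. From ``$x^{i}$ is the best response to $a$'' you obtain only $r^{i}(x^{i},a^{-i})\ge r^{i}(y^{i},a^{-i})$, i.e.\ $g(a^{-i})\ge 0$; nothing forces $a^{-i}$ to \emph{maximise} $c^{-i}\mapsto g(c^{-i})$ over $A^{-i}$. A two-player counterexample: let player~$i$ have actions $\{x,y\}$, the opponent actions $\{a_{1},a_{2}\}$, and rewards $r^{i}(x,a_{1})=1$, $r^{i}(x,a_{2})=10$, $r^{i}(y,\cdot)=0$. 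Then $x$ is the unique best response to $a^{-i}=a_{1}$, yet $g(a_{1})=1<10=g(a_{2})$. With $\sigma^{-i}_{t}=(0.1,0.9)$ one gets $h(t)=9.1>g(a_{1})$, and your own formula yields $w^{\dagger}-w^{\ddagger}=\beta_{t}\bigl(g(a_{1})-h(t)\bigr)<0$. Neither the tie-breaking convention in Equation~(\ref{eq:BR}) nor Lemma~\ref{Lem:lin1} can repair this, because the failure is structural: ``$x^{i}$ best-responds to $a^{-i}$'' constrains $g$ only at the single point $a^{-i}$, not globally. Your full-support argument in the last paragraph then gives $h(t)<\max_{c^{-i}}g(c^{-i})$, which is strictly weaker than the $h(t)<g(a^{-i})$ you need.

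The paper does not go through your maximiser claim at all; its argument is by contradiction and is dynamical rather than algebraic. It observes that repeated play of $a$ pushes $\sigma^{-i}$ towards the point mass on $a^{-i}$, so that beyond some threshold $f$ on $\sigma^{-i}(a^{-i})$ the selected action must be $x^{i}$; it then argues that if $w^{\dagger}<w^{\ddagger}$ (with $w^{\ddagger}>0$) the ordering persists by linearity and would eventually make $y^{i}$ the selected action, contradicting the threshold property. That line of reasoning never asserts that $a^{-i}$ is a maximiser of $g$; it instead leverages the asymptotic destination of the estimates. If you want to salvage a direct computation, you would need an extra hypothesis---for instance that $\sigma^{-i}_{t}$ already puts enough mass on $a^{-i}$ so that $h(t)\le g(a^{-i})$, or a structural restriction on the reward matrix---rather than the bare best-response assumption.
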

\begin{proof}
The fact that $x^{i}$ is the best response to $a$ indicates that if
$\sigma^{-i}(a^{-i})=1$, then player $i$ will choose action $x^{i}$ at
the next iteration. If $\sigma^{-i}(a^{-i})<1$, then there exists a
threshold $0<f<1$ such that player $i$ will continue to choose action
$a^{i}$ when $\sigma^{-i}(a^{-i})<f$; player $i$ will choose $x^i$
when $\sigma^{-i}(a^{-i})\ge f$
\cite{learning_in_games}. 
As the expected rewards are linear functions, we have
$\bar{w}^{\dagger}>0$, which means that player $i$'s confidence on
selecting $x^i$ is increasing as $a$ is played again. This confidence
will keep increasing as $\sigma^{-i}(a^{-i})$ increases, until $\sigma^{-i}(a^{-i})\ge f$ and then $x^i$ is
played.
If there exists another action $y^{i}\not= x^i$ whose expected reward
is also increased at the $(t+1)$-th iteration, then we have
$w^{\ddagger} > 0$. If its expected reward is increasing faster than $x^{i}$, i.e.,
$w^{\dagger}<w^{\ddagger}$ and $\bar{w}^{\dagger}<\bar{w}^{\ddagger}$, then at a later iteration of any
of the game playing process FP, GFP and AFFFP, it is possible to select
$y^i$ if $a$ is played sufficient number of times to reach
$\sigma^{-i}(a^{-i})\ge f$. However, this is a contradiction because the best
response to $a$ is $x^{i}$ and not $y^{i}$. 
Furthermore, we cannot have $w^{\dagger}<w^{\ddagger} \land
\bar{w}^{\dagger}\ge\bar{w}^{\ddagger}$ or $w^{\dagger}\ge w^{\ddagger} \land
\bar{w}^{\dagger}<\bar{w}^{\ddagger}$ because of the expected rewards are linear functions with respect to the estimates of other players' strategies.
\end{proof}

\subsection{State space reduction} \label{subsec:reduction}

Now we can explain what {\em similarity} means for FP based learning
algorithms. First, we need to define {\em BSCC actions} as
follows. Let $\sigma_s$ denote that the strategy $\sigma$ in a state
$s =\langle \sigma, \pi \rangle$.
\begin{definition}
Given a BSCC $bscc$, the corresponding BSCC actions is the set of
actions $A_{bscc} \subseteq A$ such that 
$$A_{bscc} = \{a\in A\mid \exists s\in bscc \mbox{ such that }
 \sigma_s(a)>0 \}.$$
\end{definition}

\begin{theorem}
\label{prop:FP}
Given two states $s_1$ and $s_2$ such that $s_1\preceq s_2$ under FP,
GFP, and AFFFP,
if a set of BSCC actions $A_{bscc}$ can be reached from
$s_{1}$, then $A_{bscc}$ will be reached from $s_2$.
\end{theorem}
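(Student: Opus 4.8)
The plan is to reduce the statement to a one-step \emph{invariance} of the behaviour similarity relation and then propagate it along a path. Concretely, I would first establish the claim: \emph{if $s_1\preceq s_2$ and a joint action $a$ with $\sigma_1(a)=\sigma_2(a)>0$ is fired from both states, producing successors $s_1'$ and $s_2'$ at iterations $t_1+1$ and $t_2+1$, then $\sigma_1'=\sigma_2'$ and $s_1'\preceq s_2'$.} Granting this, the theorem follows by unfolding the run: take a path $s_1=p_0,p_1,\dots,p_k$ firing joint actions $a_0,\dots,a_{k-1}$ and ending in a state $p_k$ of a BSCC whose BSCC actions are $A_{bscc}$; since $\sigma_{p_0}=\sigma_1=\sigma_2$, the action $a_0$ is enabled at $s_2$, and iterating the claim produces a ``mirrored'' path $s_2=q_0,q_1,\dots,q_k$ with $p_\ell\preceq q_\ell$ and $\sigma_{p_\ell}=\sigma_{q_\ell}$ for every $\ell$, so each $a_\ell$ is indeed enabled at $q_\ell$.

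The one-step claim would be proved with the linearity results of Section~\ref{subsec:linearity}. By the update formulas (\ref{prfp1})--(\ref{prafffp1}) the new estimates are affine in the old ones, and by (\ref{eq:pr}) every expected reward $r^i_t(a^i)$ is a linear function of the estimates, so the per-iteration reward increment $r^i_{t+1}(a^i)-r^i_t(a^i)$ is determined by $a$ and the iteration index alone. This is what makes condition~(\ref{def:similarity_FP}.1) (equality of the increment maximisers in $s_1$ and $s_2$), together with Lemma~\ref{Lem:lin1} (an update preserves the ranking of the non-updated estimates) and Lemma~\ref{Lem:lin2} (the best response to a repeated action moves in a direction fixed by that action), propagate to the successors and guarantee that the best response selected after the update agrees, i.e.\ $\sigma_1'=\sigma_2'$; conditions~(\ref{def:similarity_FP}.2)--(\ref{def:similarity_FP}.5), which bound the differences of the corresponding expected rewards along the continuation, would then be re-derived for $(s_1',s_2')$ from the same linearity. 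The weight placed on the freshly observed action shrinks like $\frac{1}{t+2}$ under FP but stays bounded away from $0$ under GFP and AFFFP, which is why the inequalities in (\ref{cond4}) and (\ref{cond7}) run opposite to those in (\ref{cond6}) and (\ref{cond8}); I would treat the FP case and the GFP/AFFFP case separately at this point.

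To recover the BSCC actions, I would run the mirrored path beyond $p_k$. Inside the BSCC $B_1$ only actions of $A_{bscc}$ are ever fired, and the set of joint strategies appearing in $B_1$ is finite; since $\sigma_{q_\ell}=\sigma_{p_\ell}$, the mirrored path stays among the (finitely many) states whose strategy lies in that set, hence must eventually enter a BSCC $B_2$. In general $B_2\neq B_1$ because the parameter parts $\pi$ differ, but every joint action fired along the mirrored path belongs to $A_{bscc}$, so $A_{B_2}\subseteq A_{bscc}$. Conversely, each $a\in A_{bscc}$ has $\sigma_s(a)>0$ for some $s\in B_1$; the state on the mirrored side carrying that same strategy belongs to $B_2$ and is visited (as $B_2$ is strongly connected), so $a$ is fired there with positive probability. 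Hence $A_{B_2}=A_{bscc}$, i.e.\ $A_{bscc}$ is reached from $s_2$.

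The hard part will be the inductive step for conditions~(\ref{def:similarity_FP}.2)--(\ref{def:similarity_FP}.5): one must check that the whole family of reward inequalities defining $\precsim$ is closed under firing a common joint action, uniformly for FP, GFP and AFFFP, and that the iteration gap $t_2-t_1$ --- which persists along the mirrored path but never grows --- never pushes the comparison outside the regime those conditions cover. The remaining ingredients (the mirroring construction, the finiteness argument for $B_2$, and the matching of the two action sets) are routine once the linearity of Lemmas~\ref{Lem:lin1} and~\ref{Lem:lin2} is in hand.
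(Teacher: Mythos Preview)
Your plan hinges on a one-step invariance claim: from $s_1\preceq s_2$ and a common fired action $a$ you want $\sigma_1'=\sigma_2'$ and then $s_1'\preceq s_2'$. This is precisely where the argument breaks. Under FP the update weight is $\frac{1}{t+2}$, so the increment added to the expected rewards at iteration $t_1$ is strictly larger than at $t_2>t_1$. Even if condition~(\ref{def:similarity_FP}.1) tells you the \emph{argmax of the increment} agrees, the \emph{argmax of the total reward} after the update need not: $s_1'$ can cross a best-response threshold while $s_2'$ has not yet, because the step taken from $s_2$ was shorter. The paper says this explicitly at the start of its proof (``it is not necessary for both states to need the same number of successor states in order to reach a BSCC''), and in Case~2 it speaks of the next switch occurring at iterations $t_1+n_1$ and $t_2+n_2$ with $n_1$ and $n_2$ possibly different. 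So a step-for-step mirrored path simply does not exist in general, and your inductive scheme cannot get started. A second, related problem is that the conditions (\ref{def:similarity_FP}.2)--(\ref{def:similarity_FP}.5) are not meant to be transition-closed: they are case selectors keyed to the \emph{structural} relationship between $s_1$ and $s_2$ in the already-built DTMC (same predecessor, direct successor, on a path, not on a path). After one step, $(s_1',s_2')$ may fall into a different case, and there is no mechanism in your outline for moving between cases.

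The paper's proof is organised around exactly this difficulty. Instead of propagating $\preceq$ step by step, it fixes one of the four structural cases and argues at the level of \emph{switching points}: using Lemmas~\ref{Lem:lin1} and~\ref{Lem:lin2} together with the relevant subset of conditions (\ref{def:similarity_FP}.1)--(\ref{def:similarity_FP}.5), it shows that whenever the best response changes along the trajectory from $s_1$, the same change must eventually occur (possibly after a different number of repetitions of the current action) along the trajectory from $s_2$, and no spurious intermediate switch can intrude. The sequence of selected joint actions---hence the BSCC actions---therefore coincides, even though the iteration counts at which the switches happen do not. If you want to salvage your approach, the right object to carry inductively is not the pair $(s_1',s_2')$ after one step each, but the pair of states at the respective \emph{next switching points}; that is essentially what the paper does, case by case.
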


\begin{proof}

In order to complete this proof it is needed to show that when two states, $s_{1}$ and $s_{2}$ are found to be similar then the same actions will be selected in their successor states and eventually they will converge to the same BSCC. We should notice here that it is not necessary for both states to need the same number of successor states in order to reach a BSCC. 

Suppose that $s_1$ is generated at iteration $t_1$ and $s_2$ at $t_2$
and we have $t_1\le t_2$. For all the variants of FP, that $s_{1}$ and $s_{2}$ are found similar belong to one of the following four cases:
\begin{enumerate}
\item  There is no path from $s_{1}$ to $s_{2}$ and
  $s_{1}$ itself is a BSCC;
\item  There is no path from $s_{1}$ to $s_{2}$ and
  $s_{1}$ alone is not a BSCC;
\item  $s_{2}$ is the direct successor state of $s_1$; 
\item  There are intermediate states between $s_{1}$ and $s_{2}$. 
\end{enumerate}

\underline{Case 1:} \\
As $s_1$ is a single state BSCC, the joint action $a$
selected in $s_1$ will be played infinitely from $s_1$ onward, which
implies that a Nash equilibrium is reached by playing $a$. By definition of {\em
  behaviour similarity}, the same joint
action $a$ will be also played in $s_{2}$, and therefore, the same
Nash equilibrium is reached. By the definition of Nash equilibrium, we
know that from $s_2$, action $a$ will be played permanently.

\underline{Case 2:} \\ Similarly to Case 1 since $s_1\preceq s_2$,
conditions (6.1), (6.2) and (6.5) should be satisfied. This case can
be divided in two sub-cases. The first one consists of the instances
where $\check{\sigma}_1=\check{\sigma}_2 \not = \sigma_1$, which means that
the joint actions $\check{a}_k$ ($k=1,2$) from the predecessors of $s_{1}$ and $s_{2}$ are the
same, but they are different from the
action $a_{t_k}$ selected in $s_{1}$ and
$s_{2}$. In both $s_1$ and $s_2$, the selected joint action changes from $\check{a}_k$ to its
best response $a_{t_k}$, which denotes that the players in both states reach
the necessary confidence level in order to change actions \cite{brown_fict,learning_in_games}. Since $a_{t_k}$
is not a Nash equilibrium, there exists another
joint action $\bar{a}_{t_k}$ that is the best response to $a_{t_k}$.
Furthermore, we have $\bar{a}_{t_1}=\bar{a}_{t_2}$ due to $a_{t_1}=a_{t_2}$ and 
condition (6.5). This condition is used to guarantee that the selected action is not
affected by smooth best response at the initial
iteration. As the best response $\bar{a}_{t_k}^i$ to $a_{t_k}^i$ for player $i$ is
deterministic, if the joint best response $\bar{a}_{t_1}$ is not equal to
$\bar{a}_{t_2}$, then we know that there exists a player $j$ that excises 
its best response earlier than some other players. This asynchronous
change is caused by the smooth best response at the initial iteration:
players chooses a non-best response to their initial
estimation. Condition (6.5) is designed to prevent two states from
being classified similar when asynchronous changes can occur in their
successor states.
Note here
that in the case of GFP and AFFFP, the learning rate will be the same
in both $s_1$ ans $s_2$, as it does not depend on the number of iterations. For that
reason, the extra constraint in Equation (\ref{cond8}) is needed.
When the joint action $\bar{a}_{t_k}$ is selected, the estimates of
opponents strategies will be increased for the elements that
participate in  
the joint action $\bar{a}_{t_k}$. Players will learn that $\bar{a}_{t_k}$ is
selected and eventually they will select the best response to
$\bar{a}_{t_k}$, say $\bar{\bar{a}}_{t_k}$, simultaneously. This
process will continue until a Nash equilibrium or a BSCC is reached.

In the second case, the same action has been played in $s_{1}$,
$s_{2}$ and their predecessors. Similarly to the first case because of
condition (6.1), there is another action $\bar{a}^{i}_{t_k}$ that is best
response to $a^{i}_{t_k}$. Nonetheless, it not necessary that the players
will simultaneously change their action and result to the same joint
action from both states. Assume that exists another action $b^{i}$ in state $s_{2}$ such as
$r^i_{2,t_{2}}(a^{i}_{t_k})-r^i_{2,t_{2}}(b^{i})<r^i_{t_{2}}(a^{i}_{t_k})-r^i_{2,t_{2}}(\bar{a}^{i}_{t_k})$.
This implies that it is possible for $r^{i}_{2,t_{2}}(b^{i})$ to exceed $r^{i}_{2,t_{2}}(a^{i}_{t_{k}})$ in fewer iterations than $r^{i}_{2,t_{2}}(\bar{a}^{i}_{t_{k}})$ will do.
Condition (6.2)
is designed to overcome this problem. This condition ensures that the difference in expected reward
between the selected and the non-selected actions in $s_2$, which
has the slowest learning rate due to $t_2\ge t_1$, is smaller than the one in
$s_{1}$.
Condition (6.2) implies that
$r^i_{1,t_{1}}(a^{i}_{t_k})-r^i_{1,t_{1}}(b^{i})<r^i_{2,t_{2}}(a^{i}_{t_k})-r^i_{2,t_{2}}(b^{i})$. Therefore,
if $b^{i}$ was to be selected from a successor of $s_2$, then the
expected reward of $b^{i}$ should be increased more from $s_{2}$ than
from $s_{1}$. Therefore, because of Lemma \ref{Lem:lin1} and the
Condition (6.5), it is not possible for $b^{i}$ to be selected from a
successor of $s_{2}$ given that it was not selected from a successor
of $s_{1}$. 

Now we assume that $\bar{a}_{t_k}$ has been played in a successor
  state of $s_1$ and $s_2$ at the $(t_1+n_1)$-th and $(t_2+n_2)$-th
  iterations respectively. We have the following inequalities:
$$r^{i}_{1,t_{1}+n_{1}}(\bar{a}^{i}_{t_k})> r^{i}_{1,t_{1}+n_{1}}(a^{i}_{t_k}) \qquad  r^{i}_{2,t_{2}+n_{2}}(\bar{a}^{i}_{t_k})> r^{i}_{2,t_{2}+n_{2}}(a^{i}_{t_k}).$$
 If $\bar{a}_{t_k}$ is not a Nash
  equilibrium, we assume that $\ddot{a}^i\in A^i$ is the action of robot $i$
  that has the highest increment in the expected rewards such that
$$r^{i}_{1,t_{1}+n_{1}}(\ddot{a}^{i})<
  r^{i}_{1,t_{1}+n_{1}}(\bar{a}^{i}_{t_k}) \qquad
  r^{i}_{2,t_{2}+n_{2}}(\ddot{a}^{i})<
  r^{i}_{2,t_{2}+n_{2}}(\bar{a}^{i}_{t_k}).$$
  
If the expected reward of action $c^{i}\in A^i$ is also increasing, then we
  have:
$$r^{i}_{1,t_{1}+n_{1}}(c^{i})< r^{i}_{1,t_{1}+n_{1}}(\bar{a}^{i}_{t_k})
  \qquad r^{i}_{2,t_{2}+n_{2}}(c^{i})<
  r^{i}_{2,t_{2}+n_{2}}(\bar{a}^{i}_{t_k}).$$
  
Assume that $\ddot{a}^{i}$ will be selected at a successor state of
$s_{1}$. Therefore the expected reward of action $\ddot{a}^{i}$ will
exceed the expected reward of $\bar{a}^{i}_{t_k}$ either in fewer
iterations than that of $c^{i}$ does, or if they need the same number of
iterations, then its expected reward will be greater than that of
$c^{i}$.
Based on the
linearity of the expected reward with respect to the expected
strategy, Lemma \ref{Lem:lin1}, Lemma \ref{Lem:lin2} and the above inequalities, the same
will also happen in the successor states of $s_{2}$ with $\ddot{a}^{i}$ being
eventually selected. If $\ddot{a}^{i}$ is not a Nash equilibrium, then
the next action will be selected from both states for the same
reason until a Nash equilibrium or a BSCC is reached.

\underline{Case 3:} \\ In this case $s_{2}$ is the immediate successor
of $s_{i}$, i.e. $t_{2}=t_{1}+1$. Similarly to the previous cases
since $s_1\preceq s_2$ conditions (6.1), (6.2) and (6.4) will be satisfied. 
If $s_{1}$ is a BSCC then the selected action in all its successor
states will be the same, e.g., $a$. Equation (\ref{cond1}) and
(\ref{cond5}) show that $a$ will be played also in the successor state
of $s_{2}$, since players use best response to select
actions. Although BR does not guarantee that there will not exist
another action $\bar{a}$ that will be selected in a later
state. However, if Equation (\ref{cond3}) is met, then player $i$ will
not like to change its action, as this will be the one which will
maximise its reward. If Equation (\ref{cond2}) is satisfied, then the
difference between the expected reward of selected action and the
actions $a^{i}: a^{i} \in A^{i} \wedge a^{i} \not \in a$ is increasing
in $s_{2}$, as Equation (\ref{cond2}) can be written as $\forall
a^i\in A^i$ such that $\sigma_k^i(a^i)=0$:
\begin{displaymath}
r_{1, t_1}^i(a^i_{t_1})-r_{1,t_1}^i(a^i)\le r_{2, t_2}^i(a^i_{t_2})-r_{2, t_2}^i(a^i) .
\end{displaymath}
Therefore, the expected reward of joint action $a$ will be increased
more than all the other joint actions, and thus, it will be selected in
all the successor states of $s_{1}$ and $s_{2}$. Thus $a$ is a
Nash equilibrium and $s_{1}$ and $s_{2}$ can be characterised as similar states.

\underline{Case 4:} \\ In this case, conditions (6.1), (6.2) and (6.3)
hold, but FP, GFP, and AFFFP are treated differently in condition
(6.3). 
In this loop $s_0'\cdots s_n' s_0'$, the estimations of a player's
opponents strategies, i.e., Equation (\ref{prfp1}), of FP would
converge to the number of occurrences each joint action appears in the
loop. That is, if a joint action $a$ appears in the loop $m$ times
over the $\mathcal{N}=n+1$ actions of the loop, then for each player
$i \in \{1, \cdots, \mathcal{I}\}$, its estimates of the opponents
strategies $\sigma^{i \rightarrow j}(a^{j})$ is
$\frac{m}{\mathcal{N}}$ for all $j \in \{1, \cdots, \mathcal{I}\}
\setminus \{i\}$ and $a^j \in a$ \cite{learning_in_games}. Thus, the
fluctuations of the expected reward will vanish through time. Hence if
the expected reward for a specific joint action of the loop increases,
this denotes that $s_{1}$ is not a BSCC. On the other hand, GFP and
AFFFP do not converge to the mixed Nash equilibrium of a
game. Instead, they will converge to a mixed strategy, which depends
on $\alpha$ and $\lambda$ respectively. The expected reward for the
joint action $a$ will be greater than $\frac{m}{N}$
\cite{learning_in_games,mythesis}.  Therefore, it is expected that the
estimates of the opponents playing the observed actions will increase,
as well as the specific expected rewards, although the increment will
get smaller in every iteration of the loop. Thus, when Equation
(\ref{cond6}) is satisfied, the the same loop is generated from
$s_{1}$ and $s_{2}$ and their successors.
\end{proof}

\section{Case studies} \label{sec:exp}

We have implemented our method in a prototype tool. In this section we present case studies to demonstrate our performance
evaluation method on several case studies.

\subsection{Simple coordination game}
The game described by Equation~(\ref{toy_reward}) is a symmetric game.
In this game, players can easily be trapped in cycles composed of
joint actions $(b_1, a_2)$ and $(b_2, a_1)$. This cycle does not have any
reward. Figures~\ref{fig:toy957} illustrates the DTMCs (with weights omitted) for the
FP, 
GFP and  AFFFP,algorithms for the example in
Equation~(\ref{toy_reward}) under the initial estimation in
Equation~(\ref{toy_weight}). In each DTMC, the ellipse
nodes represent the initial states. The BSCCs are
shown in nodes with rounded corners. 

\begin{figure*}
\centering
\subfigure[FP]{
\includegraphics[scale=0.4]{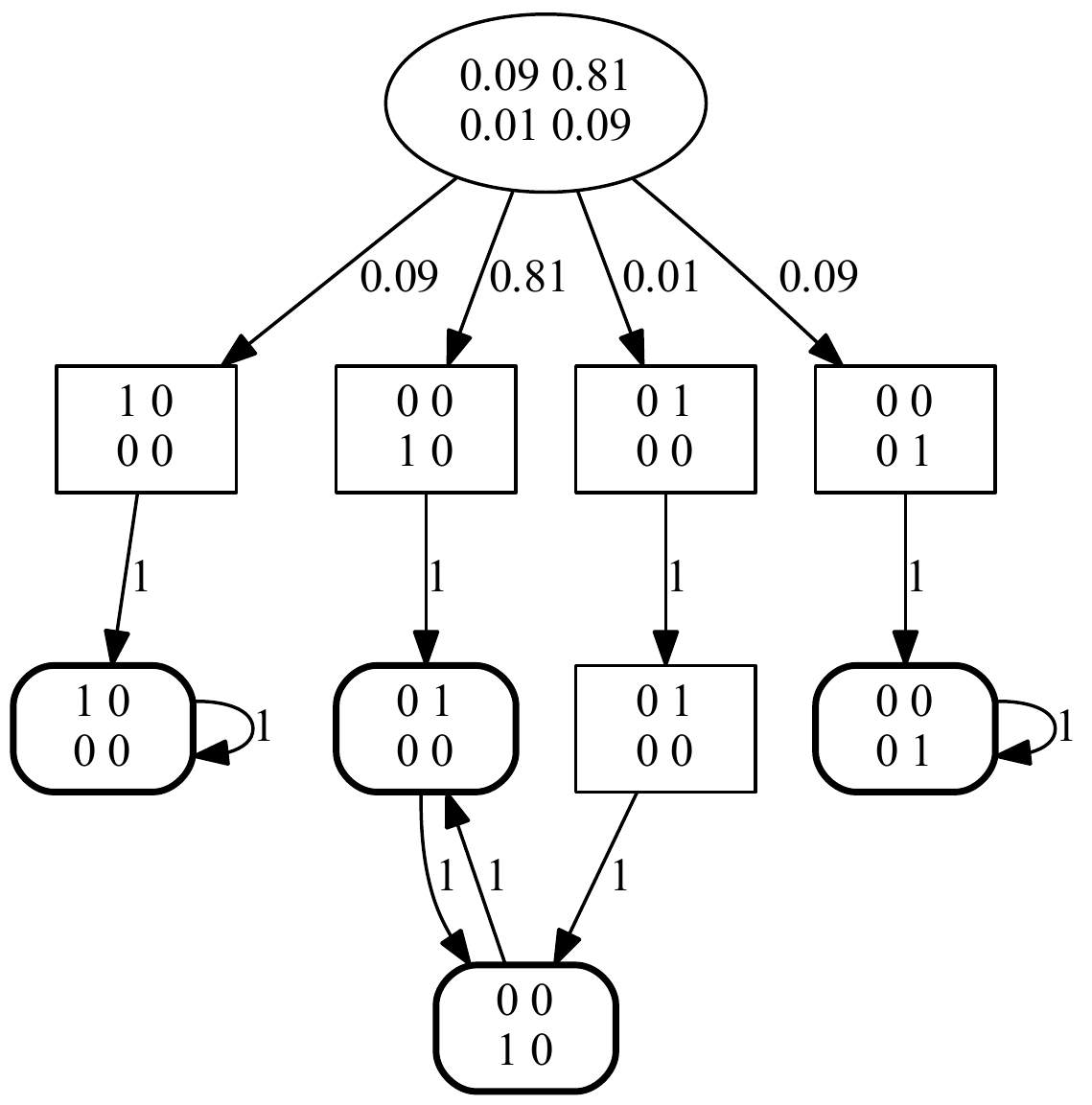}
\label{fig:fp}}
\hfil
\subfigure[GFP with $\alpha=0.2$ and AFFFP with $\lambda_0=0.8$]{
\includegraphics[scale=0.4]{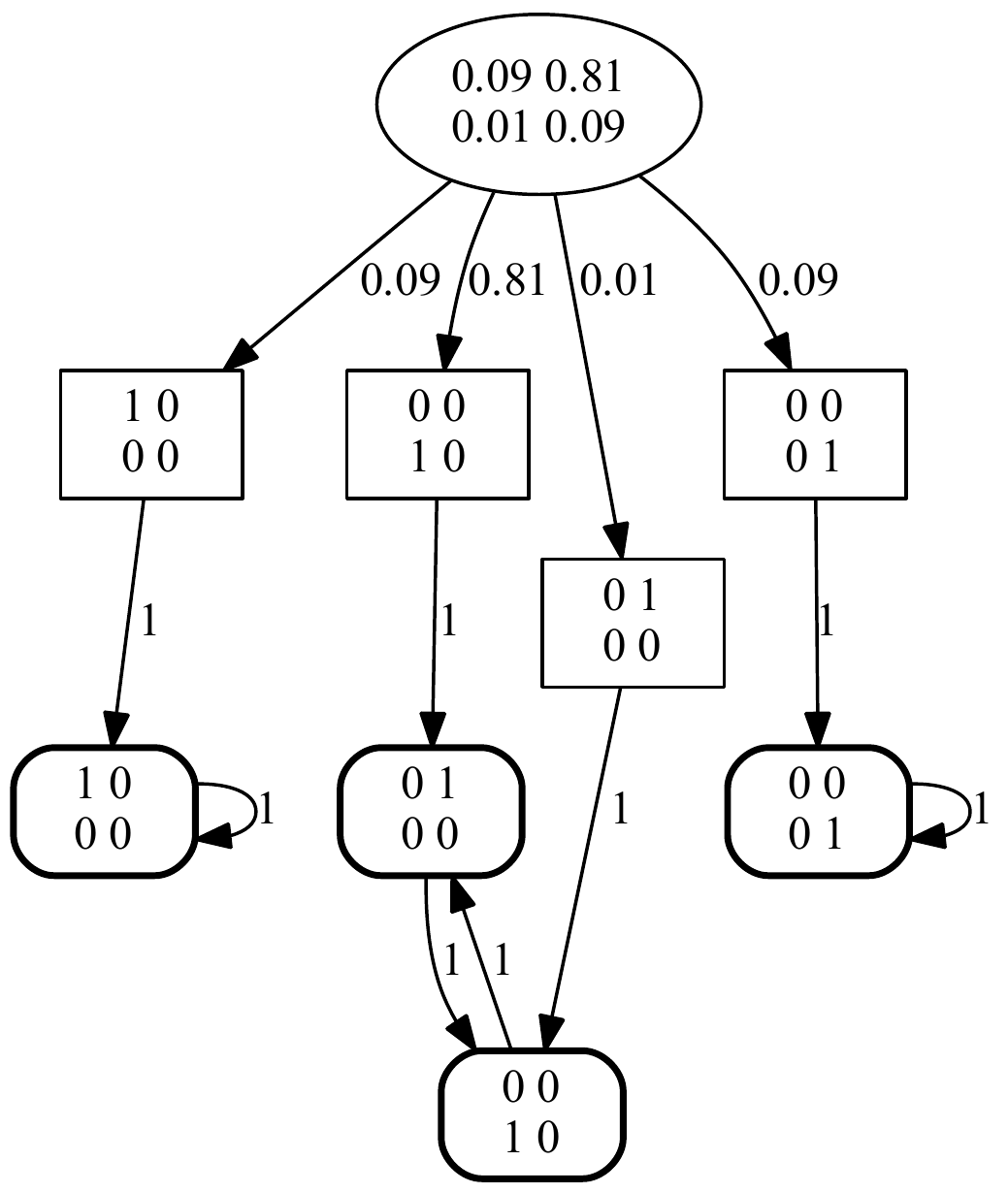}
\label{fig:gfp}}
\caption{The DTMCs for FP, GFP and AFFFP on the simple coordination example}\label{fig:toy957}
\end{figure*}



These figures clearly illustrate that FP, GFP and AFFFP cannot
avoid this cycle.

Three measures were used to compare the performance of the algorithms:
the number of states, the depth needed to reach these states, and the
probability each algorithm has to converge into Pareto efficient Nash
equilibria.  These are three general measures that can be used in
order to compare the above mentioned algorithms. Depending on the
problem, more measures can be defined in our tool, such as the
probability of a state of interest being reached even if it is not a
Nash equilibrium, or the probability of not reaching a specific state.
However, the exploration of all possible performance measures is not
the focus of this work.  The computational time is not reported
because in the most of the cases were just a fraction of one second.

In the experiments, we set $\tau=1$ for smooth best response used
in the first iteration.
We generate 100 random initial estimates and run each algorithm over
them to test their average performance, as the initial estimate can
have strong impact on the performance.  The results we obtained for
these games are reported in Table~\ref{table-results1}.  
GFP has the highest probability to reach the pure
Nash equilibria among all FP-based algorithms.

\begin{table}
\caption{\label{table-results1} Experimental results for the simple
  coordination game with $\tau=1$.}
\centering
\begin{tabular}{|c||r|r|r|r|r|r|r|}
\hline
Performance & FP & GFP & AFFFP \\
\hline
\hline
Number of states & 9 & 64 & 31  \\
\hline
Number of iterations & 2 & 57 & 24 \\
\hline
Convergence probability & 0.7204 & 0.8313 & 0.6666  \\
\hline
\end{tabular}
\end{table}

\subsection{Complex coordination game}

This game was introduced in \cite{big_game} in order to show the asymptotic properties of FP. 
It is also a symmetric game where the rewards for each player
is the same. The rewards are generated according to
Table~\ref{tab:big_game_rewards}. Each player has 20 actions in this
example, but it can be extended to a larger number of actions. The
full reward matrix can be found in Appendix.
\begin{table}%
\caption{Process to create the reward matrix of the complex
  coordination game where each player has 20 available actions to
  choose with $\delta=0.001$.}
\label{tab:big_game_rewards}
\centering
\begin{tabular}{l}
\hline
$\cdot$ set $n=5$, $\zeta=1+\frac{1}{n^{1-\delta}}$, $\beta=1-\frac{1}{n^{2*(1-\delta)}}$\\
$\cdot$ $u(i,j)=1$, $\forall i \in [n+1,4n], \textrm{ } j=i$\\
$\cdot$ $u(i,j)=1$, $\forall i \in [2,n], \textrm{ } j=i-1$\\
$\cdot$ $u(i,j)=\zeta$, $\forall i \in [n+1,4n], \textrm{ } j=i-1$\\
$\cdot$ $u(i,j)=\zeta$, $i=2n+1$, $j=4n$\\
$\cdot$ $u(i,j)=\beta$, $\forall j\leq 2n, \textrm{ } i>j$\\
$\cdot$ $u(i,j)=\beta$, $\forall i-j\leq n, \textrm{ } i>j$\\
$\cdot$ $u(i,j)=\beta$, $\forall i \in [2n+1,j-n], \textrm{ } j \in [3n+1,4n]$\\
$\cdot$ $u(i,j)=0$, Otherwise\\
\hline
\end{tabular}
\end{table}

It was shown in \cite{big_game} that for any $\delta >0$, FP will always converge to a
solution with a reward no less than $\xi$, where $\xi$ is defined as
the utility of the Pareto efficient Nash equilibrium subtracting a
constant $\epsilon=\frac{1}{2}$. Furthermore, this result was obtained for a single initial joint action
$(a_1,b_1)$.

In this work, we study the performance of FP under random initial
conditions with $0<\kappa_{t}^{i\leftarrow j}(a^{j})\leq 1, \forall
a^{j}\in A^{j}, \forall j \in \{1,\ldots, \mathcal{I}\}$ and $\delta=0.001$. We
found that for this particular $\delta$ and initial conditions, FP
almost surely converges to the Pareto efficient Nash equilibrium. This
supports the importance of the proposed methodology, since even in
cases where the theoretical properties of a game are well known, there
are specific conditions which can be of interest in practical
applications where the outcome of FP will defer.

Additionally, under specific initial conditions there is a temporary cycle
between two joint actions, e.g.,  $a^{c1}=(b_{20},a_6)$ and
$a^{c2}=(b_6,a_{19})$, which disappears after the $t^{th}$
iteration. Figures \ref{fig:util1} and \ref{fig:util2} depict the
expected reward of each action of player 1 and 2 respectively. In both
figures the expected rewards of the actions in support of $a^{c1}$ and
$a^{c2}$ are similar in the initial iterations. However, after $t>500$
iterations, FP converge to a Pareto efficient Nash equilibrium, which
is a single
action whose expected reward is maximised. The reason for this
behaviour is illustrated in Figures \ref{fig:prob1} and \ref{fig:prob2},
where Players estimates of their opponent's strategy are depicted after
$100$ and $2000$ iterations respectively. In Figure \ref{fig:prob1},
there are fluctuations in Player $1$'s estimates that leads to the
cycles between the joint actions. These fluctuations disappear in
Figure \ref{fig:prob2} when FP converged to the Nash equilibrium.

This behaviour was captured by the stopping criterion we used for
FP. When 100 iterations were used the proposed methodology does not
produce a terminal state. Thus, based on the fluctuations of the
estimates about players' opponents' strategies, and consecutively
their expected rewards, there were no evidence of a persistent cycle or
a Pareto efficient Nash equilibrium. When we increase the number of
iterations to $3000$, the
Pareto efficient Nash equilibrium was identified as a terminal state.

\begin{figure}
\centering
\subfigure[Player 1]{
\includegraphics[scale=0.4]{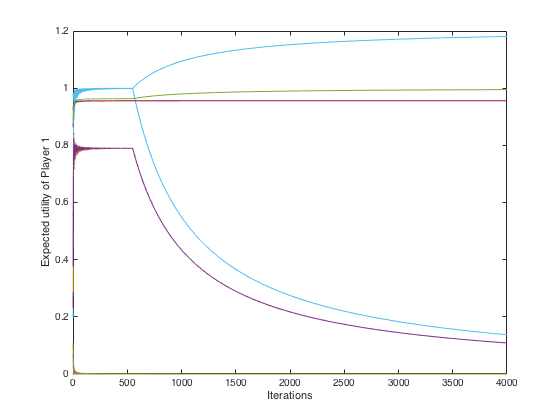}
\label{fig:util1}}
\hfill
\subfigure[Player 2]{
\includegraphics[scale=0.4]{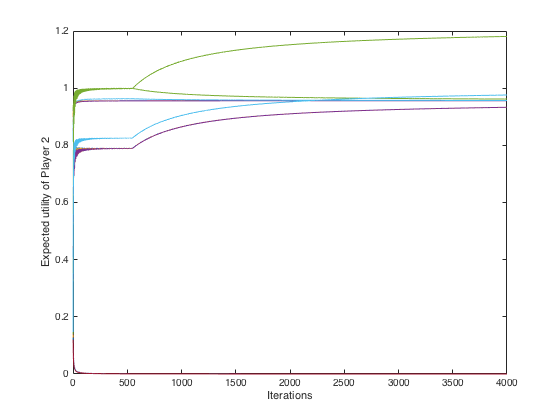}
\label{fig:util2}}
\caption{Expected rewards in the complex coordination game}
\end{figure}

\begin{figure}
\centering
\subfigure[80 iterations]{
\includegraphics[scale=0.4]{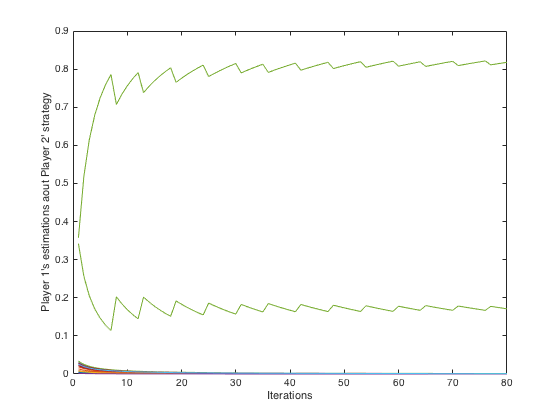}
\label{fig:prob1}}
\hfill
\subfigure[400 iterations]{
\includegraphics[scale=0.4]{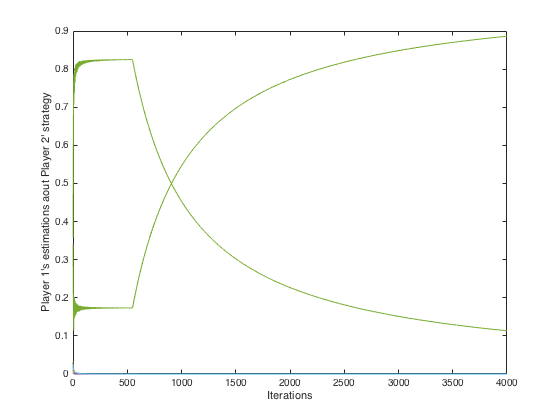}
\label{fig:prob2}}
\caption{Estimates of the opponent' strategy in the complex coordination game}
\end{figure}

\begin{table}
\caption{\label{table-results11} Experimental results for the complex
  coordination game with $\tau=1$.}
\centering
\begin{tabular}{|c||r|r|r|r|r|r|r|}
\hline
Performance & FP & GFP & AFFFP \\
\hline
\hline
Number of states  & 2373 & 2678 & 2271 \\
\hline
Number of iterations  & 90 & 7 & 32\\
\hline
Convergence probability & 0.9969 & 0.999 & 0.9115 \\
\hline
\end{tabular}
\end{table}

In this example, FP, GFP and AFFFP have good performance. Their convergence
probability is very close to 1. 
From Tables~\ref{table-results1} and \ref{table-results11},
we conjure that GFP has great potential to be applied in practice for
its high convergence probability and fast convergence.

\subsection{Shapley's game}

Games can often be classified in two categories based on the structure of
their reward function: {\em cooperative} and {\em non-cooperative}
games. The main difference between these two categories is that in
cooperative games, there is a joint action that maximises all players'
reward, while in non-cooperative games, no joint actions can maximise
all players' reward. In the latter case, 
we selected a joint action that is ``fair'' for both players.  
\begin{equation} \label{Shappley_reward}
r=\kbordermatrix{ &a_1& &a_2& &a_3\\
b_1&0,0&\vrule&1,0&\vrule&0,1\\
b_2&0,1&\vrule&0,0&\vrule&1,0\\
b_3&1,0&\vrule&0,1&\vrule&0,0
}
\end{equation}

Equation~(\ref{Shappley_reward}) shows the Shapley's
game~\cite{shapley}, which does not have any pure Nash equilibria.
However, there are two mixed strategy equlibria. In the first mixed
strategy equilibrium, the players
choose with the same probability the joint actions $(b_1, a_1)$,
$(b_2,a_2)$ and $(b_3,a_3)$. In the second one,
the players choose actions based on the following cycle: 
\begin{eqnarray}
(b_1,a_2)
\rightarrow (b_1,a_3) \rightarrow (b_2,a_3) \rightarrow (b_2,a_1)\\
\nonumber \rightarrow (b_3,a_1) \rightarrow (b_3,a_2) \rightarrow (b_1,a_2). \label{eq:shapley-cycle}
\end{eqnarray}

 Figure \ref{fig:ex1} depicts the probability of players to reach a
 decision in Shapley's game in Equation~(\ref{Shappley_reward}) when
 FP is used. In particular, each colour represent one of the joint
 actions that comprises the cycle in
 Equation~(\ref{eq:shapley-cycle}). From Figure \ref{fig:ex1}, it is
 reasonable to assume that FP converges to a single action because
 this joint action is played for more than a thousand iterations of
 the game. However, it is well known \cite{shapley} that FP is trapped
 in this cycle, where each of the six joint actions in this cycle is
 repeated for a large, but still finite, number of iterations as
 $t\rightarrow \infty$. In particular, the number of repetitions for
 each joint action in a loop is increased as $t\rightarrow
 \infty$. Our tool does not terminate on this cycle because of its
 irregular shape. 
\begin{figure}
\centering
  \includegraphics[scale=0.4]{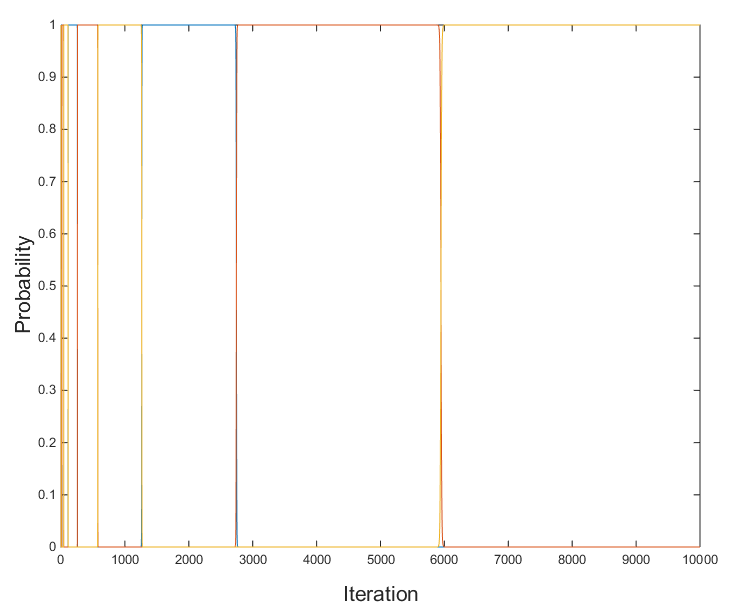}
  \captionof{figure}{Simulation for Shappley's game}   \label{fig:ex1}
\end{figure}

However, our tool successfully captures the cycle in the first mixed
strategy equilibrium. This cycle can only be generated when both
players use equal weights as the initial estimates, which is shown in
Equation~(\ref{shapley_weight}). 
\begin{equation} \label{shapley_weight}
\kappa_{0}^{1\rightarrow 2}=[\frac{1}{3}, \frac{1}{3},
  \frac{1}{3}]^{T} \mbox{ and } \kappa_{0}^{2 \rightarrow 1}=[\frac{1}{3}, \frac{1}{3},
  \frac{1}{3}],
\end{equation}
It is also worth noting that if two or
more actions have the same expected rewards when computing
Equation~(\ref{eq:BR}) in best response, which is the case in this mixed
strategy equilibrium, then the first action will be chosen
to execute. The DTMC showing the cycle is presented in
Figure~\ref{fig:shapleydtmc}, where the states and transitions for the
second equilibium is omitted. Note that the prefix before the cycle in
the middle and right branches in this figure is generated due to the
smooth best response rule in the first iteration. Under best response
and equal initial weights in Equation~(\ref{shapley_weight}), the
players would definitely choose $(b_1,a_1)$, i.e., the left branch in the figure,
while smooth best response can force them to play $(b_2,a_2)$ (the
middle branch) and $(b_3,a_3)$ (the right branch) with non-zero probability.
These two branches
enter the cycle after the weights of all actions become equal again
and this time they use best response.

\begin{figure}
\centering
\subfigure[FP]{
\includegraphics[scale=0.4]{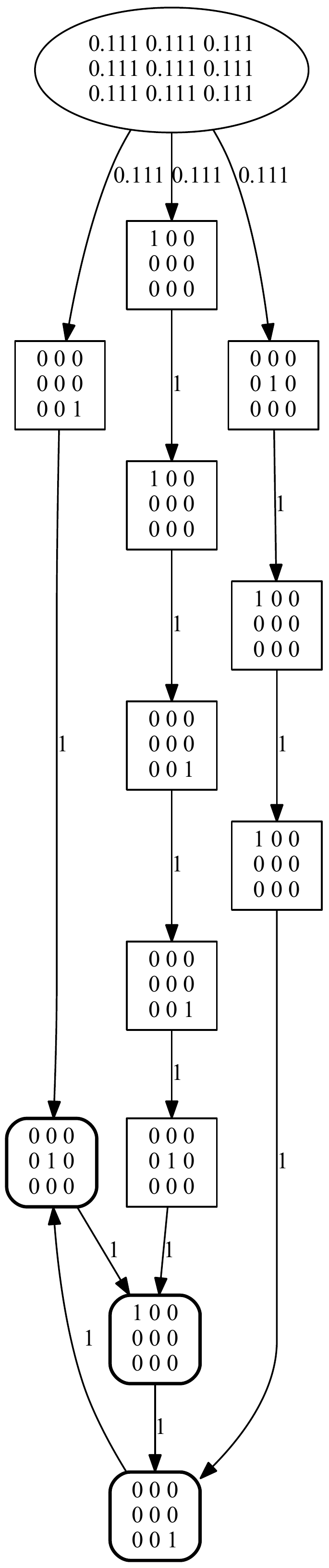}
\label{fig:shapleyfp}}
\hfil
\hfil
\subfigure[GFP and AFFFP]{
\includegraphics[scale=0.4]{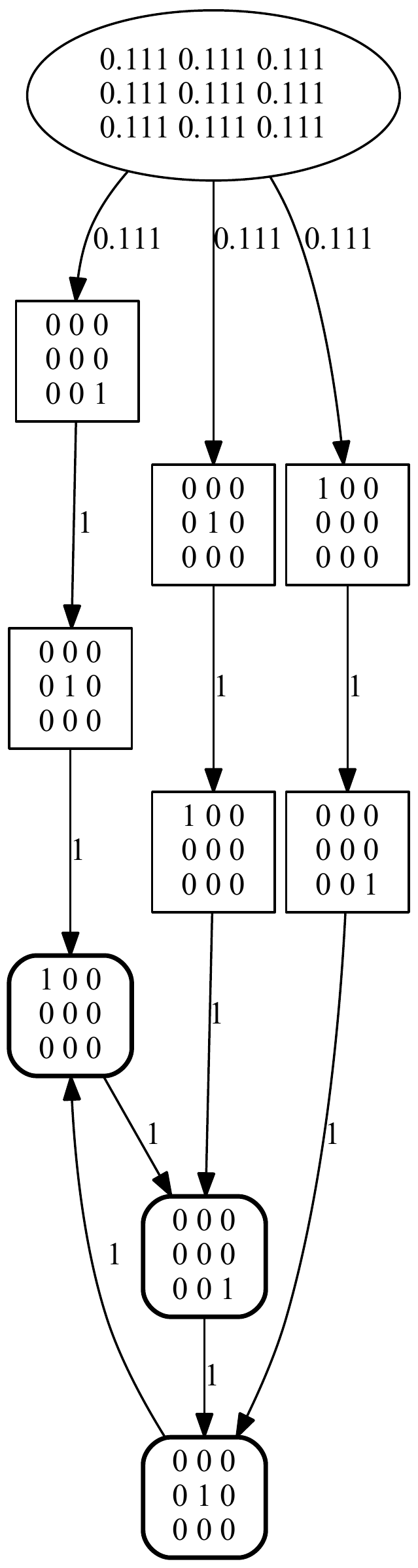}
\label{fig:shapleygfp}}
  \captionof{figure}{DTMC for the  mixed strategy equilibium in Shappley's game}   \label{fig:shapleydtmc}
\end{figure}

\subsection{Discussion}
%
The $20\times 20$ symmetric game a game, which was introduced in \cite{big_game} in
order to show the poor performance of $FP$. Nonetheless their reported results are not always correct since for the specific parameters FP converged almost always in the NE of the game. 



The experimental results are indicative of that no algorithm performed
perfectly in all games initialised.  
The two variants of Regret Matching have better 
performance than the other algorithms on the simple coordination game,
but much worse performance on the complex coordination game than the FP
based algorithms. Although GFP has the best average performance in the
two coordination games, there exist initial conditions in the complex coordination
game where GFP fails to converge to Nash equilibrium. This suggests
that there could be no single learning algorithm that is optimal for
every competitive game or cooperative-game-based distributed
optimisation. Therefore, a selection method for learning algorithms,
such as our tool, is useful to have in practice.

\section{Conclusions} \label{sec:concl}

This paper presented a novel approach that can evaluate the
performance of game-theoretic learning algorithms over finite time. The
key complexity reduction technique is is the use of a {\em behaviour similarity relation}, which makes
verification feasible by reducing the number of states. The
effectiveness has been tested on a set of learning algorithms using
various examples. These examples  
illustrated the principles of our new method as  they have relatively well-known
dynamics and equilibra, which allowed us to carefully examine
the experimental results and confirm their correctness.
The cumulative performance result
clearly demonstrated the importance of using our approach in this
domain.
%

Future work will include investigating the application of the
techniques to multi-player games and cooperative games in robotics and
other real-world scenarios for distributed optimisation. Extending
our approach to more learning algorithms is another direction we want to
pursue.  The proposed behaviour similarity relation can be used for any myopic
algorithms and FP based algorithms which do not use randomisation for
computing $\sigma^{-i}$.  However, some FP based algorithms adopt
randomisation to predict other players' strategies, such as Extended
Kalman Filter Fictitious Play (EKFFP) \cite{ifac} and Particle Filter
Fictitious Play (PFFP)~\cite{cj}.  It is a challenge to define
behaviour similarity relation and calculate the upper bound of maximum
error for these algorithms. Another direction is to evaluate the
impact of initial strategies.

\section*{Acknowledgment}
This work was supported by the EPSRC project EP/J011894/2.


\begin{thebibliography}{10}
\providecommand{\url}[1]{#1}
\csname url@samestyle\endcsname
\providecommand{\newblock}{\relax}
\providecommand{\bibinfo}[2]{#2}
\providecommand{\BIBentrySTDinterwordspacing}{\spaceskip=0pt\relax}
\providecommand{\BIBentryALTinterwordstretchfactor}{4}
\providecommand{\BIBentryALTinterwordspacing}{\spaceskip=\fontdimen2\font plus
\BIBentryALTinterwordstretchfactor\fontdimen3\font minus
  \fontdimen4\font\relax}
\providecommand{\BIBforeignlanguage}[2]{{%
\expandafter\ifx\csname l@#1\endcsname\relax
\typeout{** WARNING: IEEEtranS.bst: No hyphenation pattern has been}%
\typeout{** loaded for the language `#1'. Using the pattern for}%
\typeout{** the default language instead.}%
\else
\language=\csname l@#1\endcsname
\fi
#2}}
\providecommand{\BIBdecl}{\relax}
\BIBdecl

\bibitem{smart_grid2}
T.~Ayken and J.-i. Imura, ``Asynchronous distributed optimization of smart
  grid,'' in \emph{Proc. SICE Annual Conference (SICE'12)}.\hskip 1em plus
  0.5em minus 0.4em\relax IEEE, 2012, pp. 2098--2102.

\bibitem{brown_fict}
G.~W. Brown, ``Iterative solutions of games by fictitious play,'' in
  \emph{Activity Analysis of Production and Allocation}, T.~C. Koopmans,
  Ed.\hskip 1em plus 0.5em minus 0.4em\relax Wiley, 1951, pp. 374--376.

\bibitem{emery2004}
R.~Emery-Montemerlo, G.~Gordon, J.~Schneider, and S.~Thrun, ``Approximate
  solutions for partially observable stochastic games with common payoffs,'' in
  \emph{Autonomous Agents and Multiagent Systems, 2004. AAMAS 2004. Proceedings
  of the Third International Joint Conference on}.\hskip 1em plus 0.5em minus
  0.4em\relax IEEE, 2004, pp. 136--143.

\bibitem{emery2005}
------, ``Game theoretic control for robot teams,'' in \emph{Proceedings of the
  2005 IEEE International Conference on Robotics and Automation}.\hskip 1em
  plus 0.5em minus 0.4em\relax IEEE, 2005, pp. 1163--1169.

\bibitem{learning_in_games}
D.~Fudenberg and D.~Levine, \emph{The theory of Learning in Games}, K.~Binmore,
  Ed.\hskip 1em plus 0.5em minus 0.4em\relax The MIT Press, 1998.

\bibitem{games1}
D.~Fudenberg and J.~Tirole, \emph{Game Theory}.\hskip 1em plus 0.5em minus
  0.4em\relax MIT Press, 1991.

\bibitem{stelios}
E.~Gelenbe and S.~Timotheou, ``Random neural networks with synchronized
  interactions,'' \emph{Neural Computation}, vol.~20, no.~9, pp. 2308--2324,
  2008.

\bibitem{big_game}
P.~W. Goldberg, R.~Savani, T.~B. S{\o}rensen, and C.~Ventre, ``On the
  approximation performance of fictitious play in finite games,'' \emph{Int. J.
  Game Theory}, vol.~42, no.~4, pp. 1059--1083, 2013.

\bibitem{monitoring}
J.~Kho, A.~Rogers, and N.~R. Jennings, ``Decentralized control of adaptive
  sampling in wireless sensor networks,'' \emph{ACM Transactions on Sensor
  Networks}, vol.~5, no.~3, p.~19, 2009.

\bibitem{sn}
------, ``Decentralized control of adaptive sampling in wireless sensor
  networks,'' \emph{ACM Trans. Sen. Netw.}, vol.~5, no.~3, pp. 1--35, 2009.

\bibitem{gwfp}
D.~S. Leslie and E.~Collins, ``Generalised weakened fictitious play,''
  \emph{Games and Economic Behavior}, vol.~56, no.~2, pp. 285 -- 298, 2006.

\bibitem{May}
B.~May, N.~Korda, A.~Lee, and D.~S. Leslie, ``Optimistic bayesian sampling in
  contextual-bandit problems,'' \emph{J. Mach. Learn. Res.}, vol.~13, pp.
  2069--2106, 2012.

\bibitem{nair}
R.~Nair, M.~Tambe, M.~Yokoo, D.~Pynadath, and S.~Marsella, ``Taming
  decentralized pomdps: Towards efficient policy computation for multiagent
  settings,'' in \emph{IJCAI}, 2003, pp. 705--711.

\bibitem{Nash}
J.~Nash, ``Equilibrium points in n-person games,'' in \emph{Proc. the National
  Academy of Science, USA}, vol.~36, 1950, pp. 48--49.

\bibitem{oliehoek}
F.~A. Oliehoek, M.~T. Spaan, J.~S. Dibangoye, and C.~Amato, ``Heuristic search
  for identical payoff bayesian games,'' in \emph{Proceedings of the 9th
  International Conference on Autonomous Agents and Multiagent Systems: volume
  1-Volume 1}.\hskip 1em plus 0.5em minus 0.4em\relax International Foundation
  for Autonomous Agents and Multiagent Systems, 2010, pp. 1115--1122.

\bibitem{KNP04a}
J.~Rutten, M.~Kwiatkowska, G.~Norman, and D.~Parker, \emph{Mathematical
  Techniques for Analyzing Concurrent and Probabilistic Systems}.\hskip 1em
  plus 0.5em minus 0.4em\relax Amer. Math. Soc., 2004.

\bibitem{shapley}
L.~Shapley, \emph{In advances in Game theory}.\hskip 1em plus 0.5em minus
  0.4em\relax Prinston University, 1964.

\bibitem{ifac}
M.~Smyrnakis and S.~Veres, ``Coordination of control in robot teams using
  game-theoretic learning,'' in \emph{Proc. IFAC'14}, 2014, pp. 1194--1202.

\bibitem{mythesis}
M.~Smyrnakis, ``Game-theoretical approaches to decentralised optimisation,''
  Ph.D. dissertation, University of Bristol, 2011.

\bibitem{cj}
M.~Smyrnakis and D.~S. Leslie, ``{Dynamic Opponent Modelling in Fictitious
  Play},'' \emph{The Computer Journal}, vol.~53, pp. 1344--1359, 2010.

\bibitem{sc}
A.~Stranjak, P.~S. Dutta, M.~Ebden, A.~Rogers, and P.~Vytelingum, ``A
  multi-agent simulation system for prediction and scheduling of aero engine
  overhaul,'' in \emph{Proc. AAMAS'08}, 2008, pp. 81--88.

\bibitem{tarjan72}
R.~Tarjan, ``Depth-first search and linear graph algorithms,'' \emph{SIAM
  Journal on Computing}, vol.~1, pp. 146--160, 1972.

\bibitem{smart_grid1}
T.~Voice, P.~Vytelingum, S.~D. Ramchurn, A.~Rogers, and N.~R. Jennings,
  ``Decentralised control of micro-storage in the smart grid.'' in \emph{AAAI},
  2011, pp. 1421--1426.

\end{thebibliography}

\newpage
\section*{Appendix: Reward matrix for the complex coordination game}
\label{sec:app2}
{\scriptsize
\begin{equation*} \label{big_reward}
r=\kbordermatrix{ &a_1 &a_2 &a_3  a_4  & a_5 & a_6 & a_7 & a_8 & a_8 & a_9 & a_{10}   \\
b_1 & 0 & 0 & 0 & 0 & 0 & 0 & 0 & 0 & 0 & 0   \\
b_2 & 1 & 0 & 0 & 0 & 0 & 0 & 0 & 0 & 0 & 0  \\
b_3 & 0.95594 & 1 & 0 & 0 & 0 & 0 & 0 & 0 & 0 & 0   \\
b_4 & 0.95594 & 0.95594 & 1 & 0 & 0 & 0 & 0 & 0 & 0 & 0   \\
b_5 & 0.95594 & 0.95594 & 0.95594 & 1 & 0 & 0 & 0 & 0 & 0 & 0   \\
b_6 & 0.95594 & 0.95594 & 0.95594 & 0.95594 & 1.2099 & 1 & 0 & 0 & 0 & 0   \\
b_7 & 0.95594 & 0.95594 & 0.95594 & 0.95594 & 0.95594 & 1.2099 & 1 & 0 & 0 & 0  \\
b_8 & 0.95594 & 0.95594 & 0.95594 & 0.95594 & 0.95594 & 0.95594 & 1.2099 & 1 & 0 & 0   \\
b_9 & 0.95594 & 0.95594 & 0.95594 & 0.95594 & 0.95594 & 0.95594 & 0.95594 & 1.2099 & 1 & 0   \\
b_{10} & 0.95594 & 0.95594 & 0.95594 & 0.95594 & 0.95594 & 0.95594 & 0.95594 & 0.95594 & 1.2099 & 1   \\
b_{11} & 0.95594 & 0.95594 & 0.95594 & 0.95594 & 0.95594 & 0.95594 & 0.95594 & 0.95594 & 0.95594 & 1.2099  \\
b_{12} & 0.95594 & 0.95594 & 0.95594 & 0.95594 & 0.95594 & 0.95594 & 0.95594 & 0.95594 & 0.95594 & 0.95594  \\
b_{13} & 0.95594 & 0.95594 & 0.95594 & 0.95594 & 0.95594 & 0.95594 & 0.95594 & 0.95594 & 0.95594 & 0.95594   \\
b_{14} & 0.95594 & 0.95594 & 0.95594 & 0.95594 & 0.95594 & 0.95594 & 0.95594 & 0.95594 & 0.95594 & 0.95594  \\
b_{15} & 0.95594 & 0.95594 & 0.95594 & 0.95594 & 0.95594 & 0.95594 & 0.95594 & 0.95594 & 0.95594 & 0.95594  \\
b_{16} & 0.95594 & 0.95594 & 0.95594 & 0.95594 & 0.95594 & 0.95594 & 0.95594 & 0.95594 & 0.95594 & 0.95594   \\
b_{17} & 0.95594 & 0.95594 & 0.95594 & 0.95594 & 0.95594 & 0.95594 & 0.95594 & 0.95594 & 0.95594 & 0.95594   \\
b_{18} & 0.95594 & 0.95594 & 0.95594 & 0.95594 & 0.95594 & 0.95594 & 0.95594 & 0.95594 & 0.95594 & 0.95594   \\
b_{19} & 0.95594 & 0.95594 & 0.95594 & 0.95594 & 0.95594 & 0.95594 & 0.95594 & 0.95594 & 0.95594 & 0.95594   \\
b_{20} & 0.95594 & 0.95594 & 0.95594 & 0.95594 & 0.95594 & 0.95594 & 0.95594 & 0.95594 & 0.95594 & 0.95594 \\ 
\cline{2-11}
& a_{11} & a_{12} & a_{13} & a_{14} & a_{15} & a_{16} & a_{17} & a_{18} & a_{19} & a_{20}  \\
b_1 & 0 & 0 & 0 & 0 & 0 & 0 & 0 & 0 & 0 & 0  \\
b_2 & 0 & 0 & 0 & 0 & 0 & 0 & 0 & 0 & 0 & 0  \\
b_3 & 0 & 0 & 0 & 0 & 0 & 0 & 0 & 0 & 0 & 0  \\
b_4 & 0 & 0 & 0 & 0 & 0 & 0 & 0 & 0 & 0 & 0  \\
b_5 & 0 & 0 & 0 & 0 & 0 & 0 & 0 & 0 & 0 & 0  \\
b_6 & 0 & 0 & 0 & 0 & 0 & 0 & 0 & 0 & 0 & 0  \\
b_7 & 0 & 0 & 0 & 0 & 0 & 0 & 0 & 0 & 0 & 0  \\
b_8 & 0 & 0 & 0 & 0 & 0 & 0 & 0 & 0 & 0 & 0  \\
b_9 & 0 & 0 & 0 & 0 & 0 & 0 & 0 & 0 & 0 & 0  \\
b_{10} & 0 & 0 & 0 & 0 & 0 & 0 & 0 & 0 & 0 & 0  \\
b_{11} & 1 & 0 & 0 & 0 & 0 & 0.95594 & 0.95594 & 0.95594 & 0.95594 & 1.2099  \\
b_{12} & 1.2099 & 1 & 0 & 0 & 0 & 0 & 0.95594 & 0.95594 & 0.95594 & 0.95594  \\
b_{13} & 0.95594 & 1.2099 & 1 & 0 & 0 & 0 & 0 & 0.95594 & 0.95594 & 0.95594  \\
b_{14} & 0.95594 & 0.95594 & 1.2099 & 1 & 0 & 0 & 0 & 0 & 0.95594 & 0.95594  \\
b_{15} & 0.95594 & 0.95594 & 0.95594 & 1.2099 & 1 & 0 & 0 & 0 & 0 & 0.95594  \\
b_{16} & 0.95594 & 0.95594 & 0.95594 & 0.95594 & 1.2099 & 1 & 0 & 0 & 0 & 0  \\
b_{17} & 0 & 0.95594 & 0.95594 & 0.95594 & 0.95594 & 1.2099 & 1 & 0 & 0 & 0  \\
b_{18} & 0 & 0 & 0.95594 & 0.95594 & 0.95594 & 0.95594 & 1.2099 & 1 & 0 & 0  \\
b_{19} & 0 & 0 & 0 & 0.95594 & 0.95594 & 0.95594 & 0.95594 & 1.2099 & 1 & 0  \\
b_{20} & 0 & 0 & 0 & 0 & 0.95594 & 0.95594 & 0.95594 & 0.95594 & 1.2099 & 1  
}
\end{equation*}
}

\end{document}